\theoremstyle{plain}
    \newtheorem{theorem}{Theorem}[section]
    \newtheorem{lemma}[theorem]{Lemma}
    \newtheorem{corollary}[theorem]{Corollary}
\theoremstyle{definition}
    \newtheorem{definition}[theorem]{Definition}
\theoremstyle{remark}
    \newtheorem{remark}[theorem]{Remark}
\newcommand{\tm}{t}
\newcommand{\tmtwo}{s}
\newcommand{\tmthree}{u}
\newcommand{\var}{x}
\newcommand{\Rew}[1]{\rightarrow_{#1}}
\renewcommand{\to}{\Rew{}}
\newcommand{\eqw}[1]{\equiv_{#1}}
\newcommand{\CS}{{CS}}
\newcommand{\eqvo}{\eqw{\vosym}}
\newcommand{\vosym}{vo}
\newcommand{\msym}{{\tt m}}
\newcommand{\esym}{{\tt e}}
\newcommand{\val}{v}
\newcommand{\nbvctxtwo}[1]{\nbvctxtwo{#1}}
\newcommand{\defeq}{:=}
\newcommand{\isub}[2]{\{#1/#2\}}
\newcommand{\esub}[2]{[#1/#2]}
\renewcommand{\L}{{\tt L}}
\newcommand{\llbrace}{\{ \kern -0.27em \vert}
\newcommand{\rrbrace}{\vert \kern -0.27em \}}
\renewcommand{\l}{\lambda}
\newcommand{\ie}{{\em i.e.}}
\newcommand{\ih}{\textit{i.h.}}
\newcommand{\fv}[1]{{\tt fv}(#1)}
\newcommand{\deff}[1]{\textbf{#1}}
\newcommand{\ignore}[1]{}
\newcommand{\sep}{\hspace*{0.5cm}}
\newcommand{\ms}{\medskip}
\newcommand{\mathintitle}[2]{\texorpdfstring{#1}{#2}}
\newcommand{\myinput}[1]{\ifthenelse{\boolean{withimages}}{\input{#1}}{}}
\newcommand{\spc}{@{\hspace{.5cm}}}
\newcommand{\linlogic}{linear logic}
\newcommand{\der}{{\mathsf{d}}}
\newcommand{\weak}{{\mathsf{w}}}
\newcommand{\set}[1]{\{#1\}}
\newcommand{\nat}{\mathbb{N}}
\newcommand{\pns}{proof nets}
\newcommand{\zeronet}[1]{#1^0}
\newcommand{\bbox}[1]{box(#1)}
\newcommand{\preeqw}[1]{\sim_{#1}}
\newcommand{\kingv}[1]{king(#1)}
\newcommand{\tom}{\Rew{\msym}}
\newcommand{\toe}{\Rew{\esym}}
\newcommand{\lv}{\l_{\beta v}}
\newcommand{\lvsub}{\l_{vsub}}
\newcommand{\lvker}{\l_{vker}}
\newcommand{\lcbv}{\l_{CBV}}
\newcommand{\cbn}[1]{#1^{\tt n}}
\newcommand{\cbnp}[1]{\cbn{(#1)}}
\newcommand{\cbv}[1]{#1^{\tt v}}
\newcommand{\cbvp}[1]{\cbv{(#1)}}
\newcommand{\kt}[1]{#1^{\tt k}}
\newcommand{\lamtonets}[1]{\underline{#1}}
\newcommand{\lamtonetsvar}[2]{\underline{#1}_{#2}}
\newcommand{\lt}[1]{\lamtonets{#1}}
\tikzset{%
  ocenter/.style={baseline={([yshift=-.5ex, xshift=-.5ex]current bounding box)}},
  dummynode/.style={inner sep= 0.5pt, anchor= center},
acout/.style={inner sep= 0.6pt, anchor= center,circle,draw=black,fill=white, minimum size = 1pt},
  oldout/.style={inner sep= 0.6pt, anchor= center,circle,draw=black,fill=black!50, minimum size = 1pt},
  ln/.style={->, shorten >=2pt, shorten <=1pt, draw=black!50, line width=0.2ex},
  nl/.style={->, shorten <=2pt, shorten >=1pt, draw=black!50, line width=0.2ex},
  linknode/.style={->, shorten >=2pt, shorten <=1pt, densely dotted,draw=black!50, line width=0.2ex},
  nodelink/.style={->, shorten <=2pt, shorten >=1pt, densely dotted,draw=black!50, line width=0.2ex},
  negln/.style={->, densely dotted, shorten >=1pt, shorten <=1pt, draw=red, line width=0.18ex},
  negnl/.style={->, densely dotted, shorten <=1pt, shorten >=1pt, draw=red, line width=0.18ex},
  posln/.style={->, shorten >=1pt, shorten <=1pt, draw=blue, line width=0.18ex},
  posnl/.style={->, shorten <=1pt, shorten >=1pt, draw=blue, line width=0.18ex},
  nopol/.style={->, shorten <=1pt, shorten >=1pt, draw=gray, line width=0.18ex},
  nopolrev/.style={<-, shorten <=1pt, shorten >=1pt, draw=gray, line width=0.18ex},
  pure/.style={->, shorten <=1pt, shorten >=1pt, draw=brown, line width=0.18ex},
  exppure/.style={->, shorten <=1pt, shorten >=1pt, draw=cyan, line width=0.18ex},
  enaryedge/.style={preaction={decorate},decoration={markings,mark=at position .5 with {\draw [shorten >=0pt, shorten <=0pt,draw=cyan,-](0pt,3pt) -- (0pt,-3pt);}}},
mnaryedge/.style={preaction={decorate},decoration={markings,mark=at position .5 with {\draw [shorten >=0pt, shorten <=0pt,draw=brown,-](0pt,3pt) -- (0pt,-3pt);}}},
  neutral/.style={->, shorten <=1pt, shorten >=1pt, draw=black, line width=0.18ex, dashed},  
  und/.style={ shorten <=1pt, shorten >=1pt, draw=gray, line width=0.18ex},
  princ/.style={postaction={decorate},decoration={markings,mark=at position .5 with {\node [inner sep= 0.6pt, anchor= center,circle, draw=brown, minimum size = 0.8pt, solid, line width=0.10ex]{};}}},
  mprinc/.style={postaction={decorate},decoration={markings,mark=at position .4 with {\node [inner sep= 0.7pt, anchor= center,circle,draw=brown, fill=brown, minimum size = 0.9pt, solid, line width=0.10ex]{};}}},
  eprinc/.style={postaction={decorate},decoration={markings,mark=at position .4 with {\node [inner sep= 0.7pt, anchor= center,circle,draw=cyan, fill=cyan, minimum size = 0.9pt, solid, line width=0.10ex]{};}}},
  genred/.style={preaction={decorate},decoration={markings,mark=at position .5 with {\draw [shorten >=0pt, shorten <=0pt,draw=red,-](0pt,3pt) -- (0pt,-3pt);}}},
  gengray/.style={preaction={decorate},decoration={markings,mark=at position .5 with {\draw [shorten >=0pt, shorten <=0pt,draw=gray,-](0pt,3pt) -- (0pt,-3pt);}}},
  genblue/.style={preaction={decorate},decoration={markings,mark=at position .5 with {\draw [shorten >=0pt, shorten <=0pt,draw=blue,-](0pt,3pt) -- (0pt,-3pt);}}},
  genbluet/.style={preaction={decorate},decoration={markings,mark=at position .75 with {\draw [shorten >=0pt, shorten <=0pt,draw=blue,-](0pt,3pt) -- (0pt,-3pt);}}},
  genblues/.style={preaction={decorate},decoration={markings,mark=at position .25 with {\draw [shorten >=0pt, shorten <=0pt,draw=blue,-](0pt,3pt) -- (0pt,-3pt);}}},
  genexppure/.style={preaction={decorate},decoration={markings,mark=at position .5 with {\draw [shorten >=0pt, shorten <=0pt,draw=cyan,-](0pt,3pt) -- (0pt,-3pt);}}},
  negnlgen/.style={preaction={decorate},decoration={markings,mark=at position .5 with {\draw [shorten >=0pt, shorten <=0pt,draw=red,-](0pt,3pt) -- (0pt,-3pt);}},->, shorten <=1pt, shorten >=1pt, densely dotted, draw=red, line width=0.20ex},
  neglngen/.style={preaction={decorate},decoration={markings,mark=at position .5 with {\draw [shorten >=0pt, shorten <=0pt,draw=red,-](0pt,3pt) -- (0pt,-3pt);}},->, shorten <=1pt, shorten >=1pt, densely dotted, draw=red, line width=0.20ex},
  posnlgen/.style={preaction={decorate},decoration={markings,mark=at position .5 with {\draw [shorten >=0pt, shorten <=0pt,draw=red,-](0pt,3pt) -- (0pt,-3pt);}},->, shorten <=1pt, shorten >=1pt, draw=blue, line width=0.20ex},
  poslngen/.style={preaction={decorate},decoration={markings,mark=at position .5 with {\draw [shorten >=0pt, shorten <=0pt,draw=red,-](0pt,3pt) -- (0pt,-3pt);}},->, shorten <=1pt, shorten >=1pt, draw=blue, line width=0.20ex},
  linknodeNor/.style={ shorten >=2pt, shorten <=1pt, draw=black!50, line width=0.10ex},
  nodelinkNor/.style={ shorten <=2pt, shorten >=1pt, draw=black!50, line width=0.10ex},
  linknodeMark/.style={ shorten >=2pt, shorten <=1pt, draw=black!50, line width=0.30ex},
  nodelinkMark/.style={ shorten <=2pt, shorten >=1pt, draw=black!50, line width=0.30ex},
  nodelinkgen/.style={densely dotted, preaction={decorate},decoration={markings,mark=at position .5 with {\draw [shorten >=0pt, shorten <=0pt,-](0pt,3pt) -- (0pt,-3pt);}},->, shorten <=2pt, shorten >=1pt, draw=black!50, line width=0.20ex},
  posnlgen/.style={preaction={decorate},decoration={markings,mark=at position .5 with {\draw [shorten >=0pt, shorten <=0pt,-](0pt,3pt) -- (0pt,-3pt);}},->, shorten <=2pt, shorten >=1pt, draw=black!50, line width=0.18ex},
    poslngen/.style={preaction={decorate},decoration={markings,mark=at position .5 with {\draw [shorten >=0pt, shorten <=0pt,-](0pt,3pt) -- (0pt,-3pt);}},->, shorten >=2pt, shorten <=1pt, draw=black!50, line width=0.18ex},
  nodelinkgenMark/.style={postaction={decorate},decoration={markings,mark=at position .5 with {\draw [shorten >=0pt, shorten <=0pt,-](0pt,3pt) -- (0pt,-3pt);}}, shorten <=2pt, shorten >=1pt, draw=black!50, line width=0.30ex, },  
  nospace/.style={inner sep= 0pt, anchor=center},
  etic/.style={inner sep= 0.5pt, fill=white, anchor= center},
  punto/.style={fill=gray, circle, inner sep= 0.7pt, anchor= center},
  port/.style={inner sep= 1pt, anchor= center},
  emp/.style={inner sep= 4pt, fill=white, anchor= center, circle, draw=black, densely dotted},
  corner/.style={inner sep= 0pt, anchor= center},
  eport/.style={inner sep= 0.8pt, anchor= center,circle,draw=cyan,fill=white, minimum size = 0.8pt},
  mport/.style={inner sep= 0.8pt, anchor= center, circle,draw=white,fill=brown, minimum size = 0.8pt, solid, 
line width=0.10ex},
  posport/.style={inner sep= 0.8pt, anchor= center,circle,draw=blue,fill=white, minimum size = 0.8pt},
  negport/.style={inner sep= 0.8pt, anchor= center, circle,draw=red,fill=white, minimum size = 0.8pt, solid, line width=0.10ex},  
  port/.style={inner sep= 0.5pt, anchor= center, circle,draw=black,fill=black, minimum size = 0.5pt},
  biport/.style={inner sep= 0.5pt, anchor= center, circle,draw=black,fill=white, minimum size = 2pt},
  negat/.style={fill=gray, circle, inner sep= 0.7pt, anchor= center},
  net/.style={draw=gray,inner sep=4pt,thick,ellipse, anchor=center},
  smallnet/.style={draw=gray,inner sep=3pt,thick,ellipse },
  bignet/.style={draw=gray,inner sep=12pt,thick,ellipse},
  box/.style={rectangle, draw= gray,rounded corners=2pt, inner sep= 1pt, line width=0.30ex},
  boxangle/.style={inner sep= 0.5pt},
  noboxline/.style={draw= white,rounded corners, line width=0ex},
  jboxline/.style={draw= gray,rounded corners, line width=0.20ex, overlay},
  exboxline/.style={draw= gray,line width=0.20ex, overlay},
  inductiveTr/.style={draw=black!50, minimum size=0.9cm},
  inductiveTrs/.style={draw=black!50, minimum size=0.6cm},
  vcenter/.style={baseline={([yshift=-.5ex]current bounding box)}},  
  jump/.style={ line width=0.2ex, shorten >=3pt, shorten <=2pt, draw=black!50, line width=0.30ex, ->},
    jumpblue/.style={postaction={decorate,decoration={markings,mark=at position .4 with {\node[etic,fill=white]{{$\jop$}};}}},  line width=0.2ex, shorten >=3pt, shorten <=2pt,  line width=0.30ex, ->, draw=blue},
    jumpnopol/.style={postaction={decorate,decoration={markings,mark=at position .4 with {\node[etic,fill=white]{{$\jop$}};}}}, line width=0.2ex, shorten >=3pt, shorten <=2pt, draw=gray, densely dotted, ->},
  jumpsub/.style={postaction={decorate,decoration={markings,mark=at position .4 with {\node[etic,fill=white]{{$\jop$}};}}}, line width=0.2ex, shorten >=3pt, shorten <=2pt, draw=blue, ->},
  jumpshift/.style={postaction={decorate,decoration={markings,mark=at position .4 with {\node[etic,fill=white]{{\tiny s}};}}}, line width=0.2ex, shorten >=3pt, shorten <=2pt, draw=black!50, line width=0.30ex, ->},
  jumpcut/.style={postaction={decorate,decoration={markings,mark=at position .5 with {\node[etic,fill=white]{{\tiny c}};}}}, line width=0.2ex, shorten >=3pt, shorten <=2pt, draw=blue, line width=0.18ex, ->},
  jumpbord/.style={postaction={decorate,decoration={markings,mark=at position .4 with {\node[etic,fill=white]{{\tiny b}};}}}, line width=0.2ex, shorten >=3pt, shorten <=2pt, draw=black!50, line width=0.30ex, ->},
  jumpzero/.style={postaction={decorate,decoration={markings,mark=at position .4 with {\node[etic,fill=white]{{\tiny 0}};}}}, line width=0.2ex, shorten >=3pt, shorten <=2pt, draw=black!50, line width=0.30ex, ->},
   jumpgen/.style={preaction={decorate},decoration={markings,mark=at position .7 with {\draw [shorten >=0pt, shorten <=0pt,-](0pt,3pt) -- (0pt,-3pt);}}, line width=0.2ex, shorten >=2.5pt, shorten <=1pt, draw=black!50, line width=0.30ex, ->},
  jumpcutgen/.style={preaction={decorate},decoration={markings,mark=at position .5 with {\draw [shorten >=0pt, shorten <=0pt,-](0pt,3pt) -- (0pt,-3pt);}},postaction={decorate,decoration={markings,mark=at position .4 with {\node[etic,fill=white]{{\tiny c}};}}}, line width=0.2ex, shorten >=3pt, shorten <=2pt, draw=blue, line width=0.18ex, ->},
  jumpbordgen/.style={preaction={decorate},decoration={markings,mark=at position .7 with {\draw [shorten >=0pt, shorten <=0pt,-](0pt,3pt) -- (0pt,-3pt);}},postaction={decorate,decoration={markings,mark=at position .4 with {\node[etic,fill=white]{{\tiny b}};}}}, line width=0.2ex, shorten >=3pt, shorten <=2pt, draw=black!50, line width=0.30ex, ->},
   jumpzerogen/.style={preaction={decorate},decoration={markings,mark=at position .7 with {\draw [shorten >=0pt, shorten <=0pt,-](0pt,3pt) -- (0pt,-3pt);}},postaction={decorate,decoration={markings,mark=at position .4 with {\node[etic,fill=white]{{\tiny 0}};}}}, line width=0.2ex, shorten >=3pt, shorten <=2pt, draw=black!50, line width=0.30ex, ->},
  arrowed/.style={preaction={decorate},decoration={markings,mark=at position .55 with {\arrow{>}}}, thin},
  jumpLink/.style={etic,fill=white},
  revArrowed/.style={preaction={decorate},decoration={markings,mark=at position .55 with {\arrowreversed{>}}}, thin},
  revBiarrowed/.style={preaction={decorate},decoration={markings,mark=at position .45 with {\arrowreversed{<<}}}, thin},
  cell/.style={minimum size=0.6cm},
  every label/.style={label distance = 3pt, font=\tiny, inner sep= 1pt},  
  every node/.style={font=\scriptsize }
}
\newcommand{\stalt}{26pt}
\newcommand{\stlar}{20pt}
\newcommand{\hstalt}{\stalt/2}
\newcommand{\hstlar}{\stlar/2}
\newcommand{\lparv}[4]
{
\lbinsym{#1}{#2}{#3}{#4}{$\parr$}
\draw[exppure, in=210, out=90] (#2) to (#4);
\draw[exppure, out=-30, in=90] (#4) to (#3);
\draw[mprinc,pure] (#1) to (#4);
}
\newcommand{\lparvangle}[5]{
\lbinfixr{#1}{#2}{#3}{#4}{$\parr$}
\draw[exppure, #5] (#2) to (#4);
\draw[exppure, out=-30, in=90] (#4) to (#3);
\draw[mprinc, pure] (#1) to (#4);
}
\newcommand{\lparvanglefake}[5]{
\lbinfixr{#1}{#2}{#3}{#4}{}
}
\newcommand{\ltensv}[4]
{
\lbinsym{#1}{#2}{#3}{#4}{$\otimes$}
\draw[mprinc,pure, out=210, in=90] (#4) to (#2);
\draw[exppure, out=-30, in=90] (#4) to (#3);
\draw[exppure] (#1) to (#4);

}
\newcommand{\lbangv}[3]
{
\node at ($(#2.center) ! .5 ! (#1.center)$) [above= .5pt,fill=white, etic] (#3){\scriptsize $!$};

\draw[eprinc, exppure] (#2) to (#3);
\draw[pure] (#3) to (#1);
}
\newcommand{\lbangvfake}[3]
{
\node at ($(#2.center) ! .5 ! (#1.center)$) [above= .5pt,fill=white, etic] (#3){\scriptsize $!$};

}
\newcommand{\lderv}[3]
{
\node at ($(#2.center) ! .5 ! (#1.center)$) [above= .5pt,fill=white, etic] (#3){\scriptsize $\der$};

\draw[pure] (#2) to (#3);
\draw[eprinc, exppure] (#3) to (#1);
}
\newcommand{\ldervangle}[4]
{
\node at ($(#2.center) ! .5 ! (#2 |- #1)$) [above= .5pt,fill=white, etic] (#3){\scriptsize $\der$};

\draw[pure] (#2) to (#3);
\draw[eprinc, exppure, #4] (#3) to (#1);
}
\newcommand{\lweakv}[2]
{
\node at (#1.center) [above= \hstalt,fill=white, etic] (#2){\scriptsize$\weak$};

\draw[eprinc, exppure] (#2) to (#1);

}
\newcommand{\lcolbox}[4]{
\node at ($(#2.center) ! .5 ! (#3.center) ! .5 ! (#1.center)$) [below= 1pt, etic] (#4){$\square$};
\draw[pure, in=90, out=-90] (#1) to (#4);
\draw[exppure, in=90, out=210] (#4) to (#2);
\draw[exppure, in=90, out=-30] (#4) to (#3);

}
\newcommand{\lbinsym}[5]{
\node at ($(#2.center) ! .5 ! (#3.center) ! .5 ! (#1.center)$) [etic] (#4){#5};
}
\newcommand{\lbinfixr}[5]{
\node at ($(#1.center) ! .5 ! (#3 -| #1)$) [below= 1pt, etic] (#4){#5};
}
\newcommand{\stboxlw}{4pt}
\newcommand{\sepbox}{5pt}
\newcommand{\sepboxshort}{1pt}
\newcommand{\boxinit}{
\pgfdeclarelayer{background layer} 
\pgfdeclarelayer{foreground layer} 
\pgfsetlayers{background layer,main,foreground layer} 
}
\newcommand{\boxlinev}[2]{
\draw[#2](#1so.center) -- (#1se.center) -- (#1ne.center) -- (#1.center) -- (#1no.center) -- (#1so.center); 
}
\newcommand{\aboxvnodes}[5]{
\boxvnodes{#1}{#2}{#4}{#5}
\boxlinev{#1}{#3line}
}
\newcommand{\boxvnodes}[4]{
\node at (#1.center)[left = #3,nospace](#1no){};
\node at (#1.center)[right = #4,nospace](#1ne){};
\node at (#1no|-#2)[nospace](#1so){};
\node at (#1ne|-#2)[nospace](#1se){};
}
\newcommand{\jaux}[2]{
\node at (#1se.center)[left = \stboxlw, negport](#1auxk){};
\node at (#1auxk.center)[left= ((#2)/2),negport](#1aux1){};
}
\newcommand{\exaux}[2]{
\node at (#1se.center)[left = (0.15*#2), etic](#1auxk){\tiny $\aux$};
\node at (#1se.center)[left=0.6*#2, etic](#1aux1){\tiny $\aux$};
}
\newcommand{\boxaux}[3]{
\ifthenelse{\equal{#3}{jbox}}{\jaux{#1}{#2}}{\exaux{#1}{#2}};
}
\newcommand{\gdots}[3]{
\node at ($(#1)!.5!(#2)$) [#3](#1#2dots){\tiny $\ldots$};
}
\newcommand{\med}[2]{
($(#1)!.5!(#2)$)
}
\title{Proof nets and the call-by-value $\l$-calculus}
\author{Beniamino Accattoli
\institute{INRIA \& \'Ecole Polytechnique (LIX), France}}
\begin{document}
\maketitle

\begin{abstract}
This paper gives a detailed account of the relationship between (a variant of) the call-by-value lambda calculus and linear logic proof nets. The presentation is carefully tuned in order to realize a strong bisimulation between the two systems: every single rewriting step on the calculus maps to a single step on the nets, and viceversa. In this way, we obtain an algebraic reformulation of proof nets. Moreover, we provide a simple correctness criterion for our proof nets, which employ boxes in an unusual way.
\end{abstract}

\section{Introduction}
A key feature of \linlogic\ (LL) is that it is a refinement of intuitionistic logic, \ie\ of $\l$-calculus. In particular, \emph{one} $\beta$-reduction step in the $\l$-calculus corresponds to the sequence of \emph{two}  cut-elimination steps in \linlogic, steps which are of a very different nature: the first is multiplicative and the second is exponential. The Curry-Howard interpretation of this fact is that $\l$-calculus can be refined adding a constructor $t[x/u]$ for \emph{explicit substitutions}, and decomposing a $\beta$-step $(\l x.t)u\Rew{\beta} t\isub{x}{u}$ into the sequence $(\l x.t)u\tom t[x/u] \toe t\isub{x}{u}$. 

Another insight due to \linlogic\ is that proofs can be represented graphically---by the so-called \pns---and the reformulation of cut-elimination on \pns takes a quite different flavour with respect to cut-elimination in sequent calculus. The parallel nature of the graphical objects makes the commutative cut-elimination steps, which are the annoying burden of every proof of cut-admissibility, (mostly) disappear. 

These two features of LL have influenced the theory of explicit substitutions in various ways \cite{DBLP:journals/iandc/KesnerL07,DBLP:journals/mscs/CosmoKP03}, culminating in the design of \emph{the structural $\l$-calculus} \cite{DBLP:conf/csl/AccattoliK10}, a calculus isomorphic (more precisely \emph{strongly bisimilar}) to its representation in LL \pns\ \cite{DBLP:conf/csl/AccattoliG09,phdaccattoli}. Such a calculus can be seen as an algebraic reformulation of \pns\ for $\l$-calculus \cite{Danos:Thesis:90,Reg:Thesis:92}, and turned out to be simpler and more useful than previous calculi with explicit substitutions.

Girard's seminal paper on \linlogic\ \cite{DBLP:journals/tcs/Girard87} presents two translations of $\l$-calculus into LL.  The first one follows the typed scheme $\cbnp{A\Rightarrow B}=!\cbn{A}\multimap \cbn{B}$, and it is the one to which the previous paragraphs refer to. It represents the ordinary---or call-by-name (CBN)---$\l$-calculus. The second one, identified by $\cbvp{A\Rightarrow B}=!(\cbv{A}\multimap \cbv{B})$, was qualified as \textit{boring} by Girard and received  little attention in the literature \cite{DBLP:journals/tcs/MaraistOTW99,DBLP:journals/mscs/PravatoRR99,DBLP:journals/tcs/DanosR99,DBLP:conf/gg/FernandezM02,DBLP:journals/corr/abs-1003-5515,DBLP:conf/ifl/Mackie05}. Usually, it is said to represent Plotkin's call-by-value (CBV) $\lv$-calculus \cite{DBLP:journals/tcs/Plotkin75}. These two representations concern typed terms only, but it is well-known that they can be extended to represent the whole untyped calculi by considering linear recursive types ($o=!o\multimap o$ for call-by-name 
and and $o=!(o\multimap o)$ for call-by-value).

Surprisingly, the extension of the CBV translation to the untyped calculus $\lv$-calculus introduces a violent unexpected behavior: some normal terms in $\lv$ map to (recursively typed) \pns\ without normal form (see \cite{AccLinearity} for concrete examples and extensive discussions). This fact is the evidence that there is something inherently wrong in the CBV translation.

In this paper we show how to refine the three actors of the play (the CBV $\l$-calculus, the translation and the \pns\ presentation) in order to get a perfect match between terms and \pns. Technically, we show that the new translation is a strong bisimulation\footnote{A strong bisimulation between two rewriting systems $S$ and $R$ is a relation $\equiv$ between $S$ and $R$ s.t. whenever $s\equiv r$ then for every step from $s \Rew{S} s'$ there is a step $r\Rew{R} r'$ s.t. $s'\equiv r'$, \emph{and viceversa} (for $s,s'\in S$ and $r,r'\in R$).}, and since strong bisimulations preserve reductions length (in both directions), the normalization mismatch vanishes.

Interestingly, to obtain a strong bisimulation we have to make some radical changes to both the calculus and the presentation of \pns. The calculus, that we call the \emph{value substitution kernel} $\lvker$ \cite{AccLinearity}, is a subcalculus of the \emph{value substitution calculus} $\lvsub$ studied in  \cite{DBLP:conf/flops/AccattoliP12}, which is a CBV $\l$-calculus with explicit substitutions. Such a kernel is as expressive as the full calculus, and can be thought as a sort of CPS representation of $\lvsub$. 

Here, however, we mostly take the calculus for granted (see \cite{AccLinearity} for more details) and rather focus on \pns. Our two contributions are:

\begin{enumerate}
 \item \deff{Graphical syntax and algebraic formalism}: it is far from easy to realize a strong bisimulation between terms and nets, as it is necessary to take care of many delicate details about weakenings, contractions, representation of variables, administrative reduction steps, and so on. The search for a strong bisimulation may seem a useless obsession, but it is not. Operational properties as confluence and termination then transfer immediately from graphs to terms, and viceversa. More generally, such a strong relationship turns the calculus into an algebraic language for \pns, providing an handy tool to reason by structural induction over \pns.

\item \deff{Correctness criterion}: we provide a characterization of the \pns\ representing $\lvker$ based on graph-theoretical principles and which does not refer to $\lvker$, that is, we present a \emph{correctness criterion}. Surprisingly, the known criteria for the representation of the call-by-name $\l$-calculus (with explicit substitutions) fail to characterize the fragment encoding the call-by-value $\l$-calculus. Here we present a simple and non-standard solution to this problem. We hack the usual presentation of \pns\ so that Laurent's criterion for polarized nets \cite{DBLP:conf/tlca/Laurent99,DBLP:journals/tcs/Laurent03,phdlaurent}---the simplest known correctness criterion---captures the fragment we are interested in. The hacking of the syntax consists in using boxes for $\parr$-links rather than for $!$-links. An interesting point is that the fragment we deal with is not polarized in Laurent's sense, despite it is polarized in the Lamarche/intuitionistic sense.
\end{enumerate}

The use of boxes for $\parr$-links may look terribly ad-hoc. Section \ref{s:par} tries to argue that it is not. Moreover, Section \ref{s:history} presents an account of the technical points concerning the representations of terms with \pns, and how they have been treated in the literature.
\section{Terms}
In this section we introduce the calculus which will be related to proof nets, called \emph{the value substitution kernel} $\lvker$ \cite{AccLinearity}. Its syntax is:
\begin{center}
$\begin{array}{ccc@{\sep\sep\sep\sep}ccc}
 t,s,u,r&::=& x \mid \l x. t\mid \val s\mid t[x/u] &\val&::=& x\mid \l x. t
\end{array}
$\end{center}
where $t[x/u]$ is an \emph{explicit substitution} and values are noted $\val$. Note that the left subterm of an application can only be a value. The rules of $\lvker$ are:
\begin{center}
$\begin{array}{ccc@{\sep\sep\sep\sep}ccc}
 (\l x. t) u &\mapsto_{\msym} &t[x/u]&
t[x/\val\L]&\mapsto_{\esym} & t\isub{x}{\val}\L
\end{array}$\end{center}
where $\L$ is a possibly empty list of explicit substitutions $[x_1/u_1]\ldots [x_k/u_k]$ (and the fact that in the lhs of $\mapsto_{\esym}$ $\L$ appears inside $[\ ]$ while in the rhs it appears outside $\{\ \}$ is not a typo). The calculus is confluent \cite{AccLinearity}.

The peculiarity of the value substitution kernel is that iterated applications as $(tu)s$ are not part of the language. The idea is that they are rather represented as $(x s)[x/t u]$ with $x$ fresh. The calculus containing iterated applications is called \emph{the value substitution calculus} $\lvsub$, and it has been studied in \cite{DBLP:conf/flops/AccattoliP12,AccLinearity}. In \cite{AccLinearity} it is shown that $\lvsub$ can be represented inside $\lvker$ (mapping iterated applications $(tu)s$ to $(x s)[x/t u]$, as before) and that a term $t$ and its representation $\kt{t}$ are equivalent from the point of view of termination (formally $t$ is strongly (resp. weakly) normalizing iff $\kt{t}$ is, and the same is true with respect to weak---\ie\ not under lambda---reduction). If one is interested in observing termination (as it is usually the case) than $\lvsub$ and $\lvker$ are observationally equivalent (via $\kt{\cdot}$). As pointed out to us by Frank Pfenning, the map $\kt{\cdot}$ is reminiscent of the 
notion of \emph{$A$-reduction} in the theory of CPS-translations \cite{DBLP:conf/pldi/FlanaganSDF93,DBLP:journals/lisp/SabryF93}. The idea is then that $\lvker$ (and thus proof nets) is essentially the language of $A$-normal forms associated to $\lvsub$. However, the study of the precise relationship with $A$-normal forms is left to future work.

The calculus $\lvsub$ has been related to Herbelin and Zimmermann's $\lcbv$ \cite{DBLP:conf/tlca/HerbelinZ09} in \cite{DBLP:conf/flops/AccattoliP12}. In turn, $\lcbv$ is related to Plotkin's $\lv$ in \cite{DBLP:conf/tlca/HerbelinZ09}, where it is shown that the equational theory of $\lv$ is contained in the theory of $\lcbv$.\ms

The rest of the paper shows that $\lvker$ can be seen as an algebraic language for the proof nets used to interpret the call-by-value $\l$-calculus.
\section{Proof nets: definition}
\label{s:proofnets}
\emph{Introduction}. Our presentation of proof nets is non-standard in at least four points (we suggest to have a quick look to Figure \ref{fig:trans}):
\begin{enumerate}
\item \textbf{Hypergraphs}: we use hypergraphs (for which formulas are nodes and links---\ie\  logical rules---are hyperedges) rather than the usual graphs with pending edges (for which formulas are edges and links are nodes). We prefer hypergraphs because in this way contraction can be represented in a better way (providing commutativity, associativity, and permutation with box borders \emph{for free}) and at the same time we can represent cut and axiom links implicitly (similarly to what happens in interaction nets). 
\item \textbf{$\parr$-boxes}: We put boxes on $\parr$-links and not on $!$-links. This choice is discussed in Section \ref{s:par}, and it allows to use a very simple correctness criterion---\ie\ Laurent's criterion for polarized nets---without losing any property. 
\item \textbf{Polarity}: we apply a polarized criterion to a setting which is not polarized in the usual sense. 
\item \textbf{Syntax tree}: since we use proof nets to represent terms, we will dispose them on the plane according to the syntax tree of the corresponding terms, and not according to the corresponding sequent calculus proof (also the orientation of the links does not reflect the usual premise-conclusion orientation of proof nets).
\end{enumerate}

\emph{Nets}. Nets are directed and labelled hyper-graphs $G=(V(G),L(G))$, \ie, graphs where $V(G)$ is a set of labelled \deff{nodes} and $L(G)$ is a set of labelled and \deff{directed hyperedges}, called \deff{links}, which are edges with 0,1 or more sources and 0,1 or more targets\footnote{
An hyper-graph $G$ can be understood as a bipartite graph $B_G$, where $V_1(B_G)$ is $V(G)$ and $V_2(B_G)$ is $L(G)$, and the edges are determined by the relations \textit{being a source} and \textit{being a target} of an hyperedge.}. Nodes are labelled with a type in $\set{e,m}$, where $e$ stays for \textit{exponential} and $m$ for \textit{multiplicative}, depicted in blue and brown, respectively. If a node $u$ has type $e$ (resp. $m$) we say that it is a $e$-node (resp. $m$-node). We shall consider hyper-graphs whose links are labelled from $\set{!,\der,\weak,\parr,\otimes}$. The label of a link $l$ forces the number and the type of the source and target nodes of $l$, as shown in Figure \ref{fig:links} (the types will be discussed later, and the figure also contains the $\square$-link, which is not used to define nets: it will be used later to define the correction graph). Note that every link (except $\square$) has exactly one connection with a little circle: it denotes the principal node, \ie\ the node on 
which the link can 
interact. Remark the principal node for tensor and $!$, which is not misplaced. Moreover, every $\parr$-link has an associated \deff{box}, \ie, a sub-hyper-graph of $G$ (have a look to Figure \ref{fig:trans}). The \deff{sources} (resp. \deff{targets}) of a net are the nodes without (resp. outgoing) incoming links; a node which is not a source nor a target is \deff{internal}. Formally:

\begin{figure}[t]
\centering
\myinput{imm-linksv}
\caption{\label{fig:links} links.}
\end{figure}

\begin{definition}[net]
A \deff{net} $G$ is a quadruple $(|G|, B_G, \fv{G}, r_G)$, where $|G|=(V(G),L(G))$ is an hyper-graph  whose nodes are labelled with either $e$ or $m$ and whose hyperedges are $\set{!,\der,\weak,\parr,\otimes}$-links and s.t.:
\begin{itemize}
\item \deff{Root}: $r_G\in V(G)$ is a source $e$-node of $G$, called the \deff{root} of $G$.
\item \deff{Conclusions}: $\fv{G}$ is the set of targets of $G$, also called \deff{free variables} of $G$, which are targets of $\set{\der,\weak}$-links (and not of $\otimes$-links).
\item \deff{Multiplicative}: $m$-nodes have \emph{exactly one} incoming and \emph{one} outgoing link.
\item \deff{Exponential}: an $e$-node has at most one outgoing link, and if it is the target of more than one link then they all are $\der$-links. Moreover, an $e$-node cannot be isolated.
\item \deff{Boxes}: For every $\parr$-link $l$ there is a net $\bbox{l}$, called the \deff{box} of $l$ ($B_G$ is the set of boxes of $G$ and $\bbox{l}\in B_G$), with a distinguished free variable $x$, called the \deff{variable} of $l$, and s.t.:
\begin{itemize}
\item \deff{Border}: the root $r_{\bbox{l}}$ and the free variable $x$ are the $e$-nodes of $l$, and any free variable $\neq x$ of $\bbox{l}$ is not the target of a weakening.
\item \deff{Nesting}: for any two $\parr$-boxes $\bbox{l_1}$ and $\bbox{l_2}$ if $\emptyset\neq I\defeq\bbox{l_1}\cap\bbox{l_2}$, $\bbox{l_1}\not \subseteq\bbox{l_2}$, and $\bbox{l_2}\not \subseteq\bbox{l_1}$ then all the nodes in $I$ are free variables of both $\bbox{l_1}$ and $\bbox{l_2}$.
\item \deff{Internal closure}: any link $l$ of $G$ having as target an internal $e$-node of $\bbox{l}$ is in $\bbox{l}$.

\item \deff{Subnet}: the nodes and the links of $\bbox{l}$ belong to $G$ and the $\parr$-links in $\bbox{l}$ inherit the boxes from $G$.
\end{itemize}
\end{itemize}
\end{definition}

\emph{Some (technical) comments on the definition}. In the border condition the fact that the free variables $\neq x$ are not (the target) of a weakening means that weakenings are assumed to be pushed out of boxes as much as possible (of course the rewriting rules will have to preserve this invariant). The internal closure condition is a by-product of collapsing contractions on nodes, which is also the reason of the unusual formulation of the nesting condition: two boxes that are morally disjoint can in our syntax share free variables, because of an implicit contraction merging two of their conclusions.

\emph{Terminology about nets}. The \deff{level} of a node/link/box is the maximum number of nested boxes in which it is contained\footnote{Here the words \emph{maximum} and \emph{nested} are due to the fact that the conclusions of $\parr$-boxes may belong to two not nested boxes, because of the way we represent contraction.} (a $\parr$-link is not contained in its own box).  Two links are \deff{contracted} if they share an $e$-target. Note that the exponential condition states that only derelictions (\ie\ $\der$-links) can be contracted. In particular, no link can be contracted with a weakening. A \deff{free weakening} in a net $G$ is a weakening whose node is a free variable of $G$. Sometimes, the figures show a link in a box having as target a contracted $e$-node $x$ which is outside the box: in those cases $x$ is part of the box, it is outside of the box only in order to simplify the representation.

\emph{Typing}. Nets are typed using a recursive type $o=!(o\multimap o)$, that we rename $e=!(e\multimap e)=!(e^\bot\parr e)$ because $e$ is a mnemonic for \emph{exponential}. Let $m=e\multimap e=e^\bot\parr e$, where $m$ stays for  \emph{multiplicative}. Note that $e=!m$ and $m=!m\multimap !m$. Links are typed using $m$ and $e$, but the types are omitted by all figures except Figure \ref{fig:links} because they are represented using colors and with different shapes ($m$-nodes are brown and dot-like, $e$-nodes are white-filled cyan circles). Let us explain the types in Figure \ref{fig:links}. They have to be read bottom-up, and thus negated (to match the usual typing for links) if the conclusion of the logical rule is the bottom node of the link, as it is the case for the $\set{\weak,\der,\otimes}$-links, while $!$ and $\parr$ have their logical conclusion on the top node, and so their type does not need to be negated.

\begin{figure}
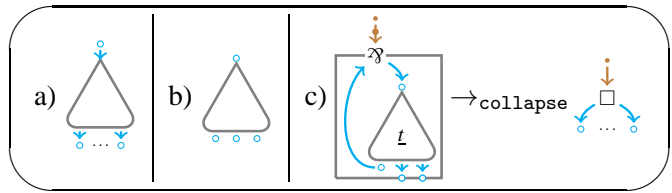

\centering
\myinput{various}
\caption{\label{fig:various} various images.}
\end{figure}

\emph{Induced $!$-boxes}. Note that a $!$-link is always applied to something ($m$-nodes cannot be conclusions), and there is not so much freedom for this \emph{something}: either it is a dereliction link or a $\parr$ with its box. Note also that in both cases we get (what would usually be) a valid content for a $!$-box. For the dereliction case it is evident, and for the $\parr$ case it is guaranteed by  the definition of net: the content of a $\parr$-box ends on $e$-nodes. Hence, any $!$-link has an associated box, induced by $\parr$-boxes, which needs not to be represented explicitly.

\emph{The translation}. Nets representing terms have the general form in Figure \ref{fig:various}.a, also schematized as in  Figure \ref{fig:various}.b. The translation $\lamtonets{\cdot }$ from terms to nets is in Figure \ref{fig:trans} (the original boring translation is sketched in Fig. \ref{fig:ord-trans}, page \pageref{fig:ord-trans}). A net which is the translation of a term is a \deff{proof net}. Note that in some cases there are various connections entering an $e$-node, that is the way we represent contraction. In some cases the $e$-nodes have an incoming connection with a perpendicular little bar: it represents an arbitrary number ($>0$) of incoming connections. The net corresponding to a variable is given by a $!$ on a dereliction and not by an (exponential) axiom, as it is sometimes the case. The reason is that an axiom (in our case a node, because axioms are collapsed on nodes) would not reflect on nets some term reductions, as $x[x/\val]\toe \val$, for which both the redex 
and the reduct would be mapped on the same net.

The translation $\lamtonets{\cdot}$ is refined to a translation $\lamtonetsvar{\cdot }{X}$, where $X$ is a set of variables, in order to properly handle weakenings during cut-elimination. The reason is that an erasing step on terms simply erases a subterm, while on nets it also introduces some weakenings: without the refinement the translation would not be stable by reduction. The clause defining $\lamtonetsvar{t}{X\cup\set{y}}$ when $y\notin\fv{t}$ is the first on the second line of Figure \ref{fig:trans}, the definition is then completed by the following two clauses: $\lamtonetsvar{t}{\emptyset}:= \lamtonets{t}$ and $\lamtonetsvar{t}{X\cup\set{y}}:= \lamtonetsvar{t}{X}$ if $y\in\fv{t}$.
 
\emph{$\alpha$-equivalence}. To circumvent an explicit and formal treatment of $\alpha$-equivalence we assume that the set of $e$-nodes and the set of variable names for terms coincide. This convention removes the need to label the targets of $\lamtonetsvar{t}{X}$ with the name of the corresponding free variables in $t$ or $X$. Actually, before translating a term $t$ it is necessary to pick a \emph{well-named} $\alpha$-equivalent term $t'$, \ie\ a term where any two different variables (bound or free) have different names.

\begin{figure}
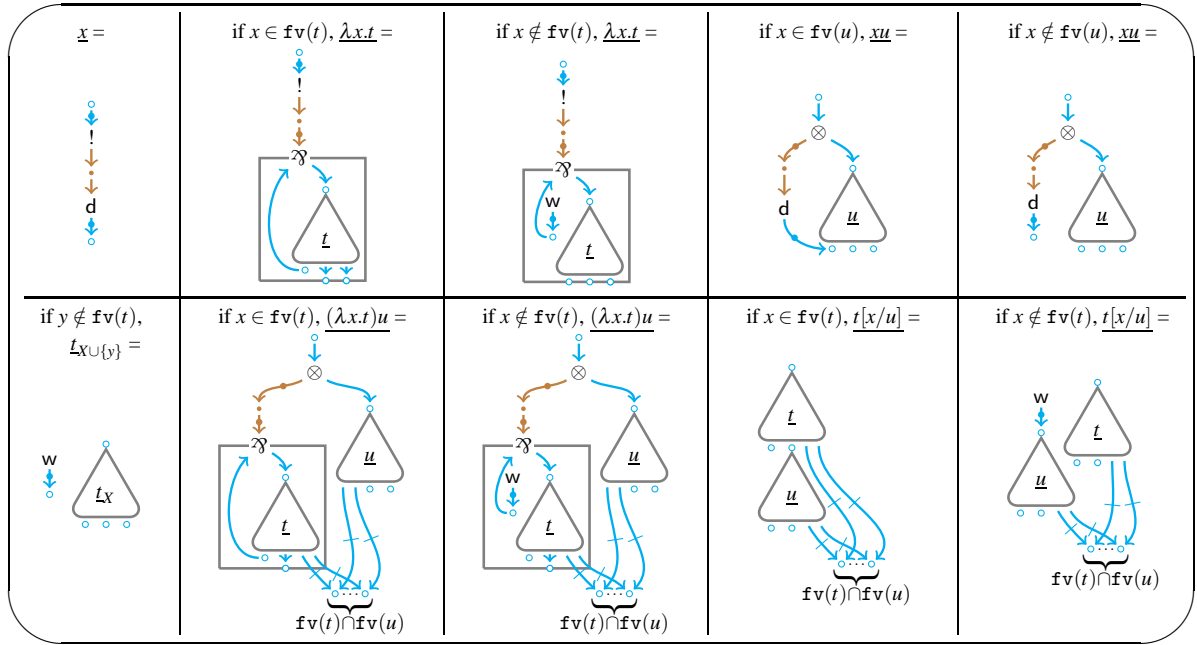

\centering
\myinput{imm-translation}
\caption{\label{fig:trans} the translation from terms to nets.}
\end{figure}

\begin{remark}
\label{rem:not-inj}
The translation of terms to nets is not injective. By simply applying the translation it is easily seen that the following pairs of terms have the same net:
\begin{equation}
\label{eq:quotient}
 \begin{array}{lll@{\hspace{.5cm}}l}
t[x/s][y/u] & \preeqw{\vosym_\CS} & t[y/u][x/s] & \mbox{if } x\notin\fv{u} \ \&\ y\notin\fv{s}\\
\val\ u[x/s] & \preeqw{\vosym_1} & (\val\ u)[x/s] & \mbox{if } x\notin\fv{\val}\\
t[x/s[y/u]] & \preeqw{\vosym_2} & t[x/s][y/u] & \mbox{if } y\notin\fv{t}\\
\end{array}\end{equation}
Let $\eqvo$ be the reflexive, transitive, and contextual closure of $\preeqw{\vosym_\CS}\cup \preeqw{\vosym_1}\cup \preeqw{\vosym_2}$. In the proof of Lemma \ref{l:graph-sub}, we will use the fact that if $t\eqvo s$ then $t$ and $s$ are mapped on the same net. We also claim---without proving it---that $\eqvo$ is exactly the quotient induced on terms by the translation to nets. 
\end{remark}

\emph{Paths.} A path $\tau$ of length $k\in\nat$ from $u$ to $v$, noted $\tau:u\to^k v$, is an alternated sequence $u=u_1,l_1,\ldots,l_k, u_{k+1}=v$ of nodes and links s.t. the link $l_i$ has source $u_i$ and target $u_{i+1}$ for $i\in\set{1,\ldots,k}$. A cycle is a path $u\to^k u$ with $k>0$.

\emph{Correctness}. The correctness criterion is based on the notion of correction graph, which is---as usual for nets with boxes---obtained by collapsing every box at level 0 into a generalized axiom link.

\begin{definition}[correction graph]
Let $G$ be a net. The correction graph $\zeronet{G}$ of $G$ is the hyper-graph obtained from $G$ by collapsing any $\parr$-box at level 0 into a $\square$-link applying the rule in Fig. \ref{fig:various}.c. 
\end{definition}


\begin{definition}[correctness]
A net $G$ is correct if:
\begin{itemize}
\item \deff{Source}: $\zeronet{G}$ has exactly one $e$-source (the root of $G$).
\item \deff{Acyclicity}: $\zeronet{G}$ is acyclic.
\item \deff{Recursive correctness}: the interior of every box is correct.
\end{itemize}
\end{definition}

As usual an easy induction on the translation shows that the translation of a term is correct, \ie\ that:

\begin{lemma}
Every proof net is correct.
\end{lemma}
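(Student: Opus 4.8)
The plan is a structural induction on the term $t$ such that the given proof net is $G=\lamtonets{t}$, following the clauses of the translation in Figure~\ref{fig:trans}. First I would reduce to the case of the plain translation: since $\lamtonetsvar{t}{X}$ is obtained from $\lamtonets{t}$ by appending one $\weak$-link per variable of $X\setminus\fv{t}$, and a $\weak$-link has a single node, which becomes a target, it is enough to treat $X=\emptyset$ and to note separately that adding free weakenings preserves the Source, Acyclicity and Recursive-correctness clauses (and the clauses of the definition of net, e.g.\ the Border condition). So in each inductive case I must re-establish, for the correction graph $\zeronet{G}$: that it has exactly one $e$-source, namely $r_G$; that it is acyclic; and that the interior of every box of $G$ is correct.

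Rather than checking the three clauses directly, I would carry through the induction the stronger invariant that $\zeronet{\lamtonets{t}}$ is a directed acyclic hyper-graph whose unique $e$-source is the root and whose $e$-sinks are exactly the free variables, each the target of a $\der$- or $\weak$-link. Source and Acyclicity are then immediate, and the invariant is local enough to be propagated clause by clause. The base case is the variable: its net ($!$ over $\der$, as explained after Figure~\ref{fig:trans}) has no box, one $e$-source, one $e$-sink, and no cycle. For an application $\val\,s$ and an explicit substitution $t[x/u]$, the net is built by gluing the nets of the immediate subterms --- which satisfy the invariant by inductive hypothesis --- plus the administrative $\der$/$!$/$\otimes$-links of Figure~\ref{fig:trans}: in the application case one adds a $\otimes$-link and a fresh root on top of $\lamtonets{\val}$ and $\lamtonets{s}$, and in the substitution case one identifies the root of $\lamtonets{u}$ with the free variable $x$ of $\lamtonets{t}$. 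In both cases the gluing node is an $e$-sink on one side and the $e$-source on the other, the old roots acquire an incoming link and stop being sources, and no directed cycle appears; since no $\parr$-box at level~$0$ is created, Recursive correctness transfers verbatim from the inductive hypothesis on the subterms.

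The only case that actually touches boxes is the abstraction $\l x.t$: the translation builds a $\parr$-link whose box is (essentially) $\lamtonets{t}$ with $x$ as distinguished variable, with a $!$-link and a fresh root on top. Recursive correctness of the new box is exactly the inductive hypothesis on $t$, and the deeper boxes inherit their correctness from it. For Source and Acyclicity one passes to $\zeronet{\cdot}$, where the new level-$0$ box collapses, by the rule of Figure~\ref{fig:various}.c, to a single $\square$-link; I would then check that $\zeronet{\lamtonets{\l x.t}}$ is this $\square$-link together with the $!$-link and the fresh root, placed above the already-collapsed correction graphs appearing below, and that the shape invariant survives --- in particular that the collapse turns the box's root and its other free variables into targets of the $\square$-link, so that no new $e$-source is exposed, and that those free variables $\neq x$, which by the Border condition are not weakened, are exactly the $e$-sinks contributed by the box.

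The main obstacle is not conceptual but the bookkeeping forced by the two non-standard features of the syntax: contractions are collapsed onto shared $e$-nodes, so in the application and substitution cases the sub-nets being glued are not literally vertex-disjoint, and the Nesting and Internal-closure conditions let a $\parr$-box share conclusions with the ambient net. One must make sure such shared $e$-nodes never become $e$-sources of the correction graph and never lie on a directed cycle, and that the box-collapse behaves well on shared conclusions. This reduces to a finite, purely local verification in each clause of Figure~\ref{fig:trans}, and the shape invariant above is tailored so that these verifications are routine --- whence the lemma is, as announced, an easy induction.
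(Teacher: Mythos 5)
Your proposal is correct and follows exactly the route the paper takes: the paper gives no detailed argument, stating only that the lemma follows by ``an easy induction on the translation,'' which is precisely the structural induction on the clauses of Figure~\ref{fig:trans} that you carry out. Your strengthened invariant (unique $e$-source at the root, $e$-sinks exactly the free variables) and your attention to the collapsed-contraction bookkeeping are sensible elaborations of what the paper leaves implicit.
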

\section{Proof nets: sequentialization}
In this section we show how to extract a term $t$ from every correct net $G$ in such a way that $t$ translates back to $G$, \ie\ we show that every correct net is a proof net. The proof of this fact is based on the notion of \emph{kingdom}, along the lines of the proof for polarized nets, see \cite{phdlaurent} (pp. 57-63).

\begin{definition}[Kingdom]
\label{d:king}
Let $G$ be a correct net and $x\notin\fv{G}$ one of its $e$-nodes. The \deff{kingdom} $\kingv{x}$  of $x$ is the set of links defined by induction on the link $l$ of source $x$:
\begin{itemize}
\item $l$ is a $!$-link: $\kingv{x}$ is given by $l$ plus the $\der$-link or the $\parr$-box on the $m$-target of $l$.
\item $l$ is a $\otimes$-link: $\kingv{x}$ is given by $l$ plus the $\der$-link or the $\parr$-box on the $m$-target of $l$ plus $\kingv{y}$, where $y$ is the $e$-target of $l$.
\end{itemize}
\end{definition}

The main property of $\kingv{x}$ is that it is the smallest subnet of root $x$, as we shall soon prove\footnote{We call \emph{kingdom of $x$} the net in def. \ref{d:king}, but at this point nothing guarantees that it is the smallest subnet of root $x$.}. To state this fact precisely we need the notion of subnet.

\begin{definition}[subnet]
Let $G$ be a correct net. A subnet $H$ of $G$ is a subset of its links s.t. it is a correct net and satisfying:
\begin{itemize}
\item \deff{Internal closure}: if $x$ is an internal $e$-node of $H$ then any link of $G$ of target $x$ belongs to $H$.
\item \deff{Box closure}: 
\begin{itemize}
\item \deff{Root}: if a $\parr$-link $l$ belongs to $H$ then its box does it too. 
\item \deff{Free variables}: if a free variable of a box $B$ of $G$ is internal to $H$ then $B\subseteq H$.
\end{itemize}
\end{itemize}
\end{definition}

The following lemma is essentially obvious, and usually omitted, but in fact it is used in the proof of Lemma \ref{l:kingdom}.

\begin{lemma}
\label{l:subsubnets}
Let $G$ be a correct net, $H$ a subnet of $G$, $x$ an internal $e$-node of $H$. Then there exists a subnet $K$ of $H$ having $x$ as root and s.t. it is a subnet of $G$.
\end{lemma}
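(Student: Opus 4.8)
The plan is to prove Lemma~\ref{l:subsubnets} by defining $K$ explicitly and then checking it satisfies the closure conditions for being a subnet of both $H$ and $G$. The natural candidate is the \emph{kingdom} $\kingv{x}$ computed \emph{inside} $H$: since $H$ is itself a correct net (by definition of subnet) and $x$ is an $e$-node of $H$ with $x \notin \fv{H}$ (because $x$ is internal to $H$), Definition~\ref{d:king} applies and produces a set of links $K := \kingv{x}^H$ by induction on the link of source $x$. First I would observe that $x$ not being a free variable of $H$ guarantees that $x$ has an outgoing link in $H$ (an $e$-node cannot be isolated, and if its only incident links were incoming $\der$-links it would be a target, hence a free variable), so the induction in Definition~\ref{d:king} is well-founded and terminates --- it strictly descends along the $e$-target chain of $\otimes$-links, which is finite by acyclicity of $\zeronet{H}$.

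Next I would verify that $K$ is a subnet of $H$. That it is a correct net with root $x$ is exactly the content of the "main property of $\kingv{x}$" announced after Definition~\ref{d:king}; here I am allowed to take it for granted only if it has been established, so to be safe I would instead argue directly: $K$ is built from a $!$-link or $\otimes$-link on $x$ together with the $\der$-link or $\parr$-box sitting on the $m$-target, recursively closed along $e$-targets of $\otimes$-links, and each such piece is self-contained (the $\parr$-boxes are pulled in whole), so $\zeronet{K}$ has $x$ as its unique $e$-source and is acyclic as a sub-hypergraph of the acyclic $\zeronet{H}$, and recursive correctness is inherited box-by-box. For the \textbf{Internal closure} condition: if $y$ is an internal $e$-node of $K$, then $y$ is either the $e$-target of a $\otimes$-link of $K$ (whose kingdom $\kingv{y}$ is included in $K$ by construction, and which therefore absorbs every link of $H$ of target $y$ by the same condition applied inside $\kingv{y}$) or $y$ lies strictly inside one of the $\parr$-boxes, where closure comes from the \textbf{Internal closure} clause of the net definition for that box. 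The \textbf{Box closure} conditions hold because the only $\parr$-links in $K$ are those whose boxes were added wholesale, and any box of $H$ having a free variable internal to $K$ must already have been swallowed during the construction.

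Finally I would promote "subnet of $H$" to "subnet of $G$". Since $H$ is a subnet of $G$ and $K \subseteq H$, the links and nodes of $K$ all lie in $G$, and $K$ is still a correct net. The two closure conditions transfer: if $y$ is internal to $K$, then in particular $y$ is internal to $H$ (a node internal to a subnet of $H$ is internal to $H$), so by internal closure of $H$ in $G$ every $G$-link of target $y$ is in $H$; but then, being of target $y$ with $y$ internal to $K$, it is in $K$ by the internal closure of $K$ just established. Likewise, a free variable of a $G$-box $B$ internal to $K$ is internal to $H$, so $B \subseteq H$ by box closure of $H$, and then $B \subseteq K$ by box closure of $K$ inside $H$. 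The main obstacle, I expect, is the bookkeeping around the nesting/internal-closure subtleties the authors flagged --- namely that in this syntax two "morally disjoint" $\parr$-boxes may share conclusion nodes because of collapsed contractions --- so the delicate point is checking that pulling in a $\parr$-box whole does not accidentally force in a link that should not belong to $\kingv{x}$, and that the \textbf{Nesting} condition of the net definition is not violated by the sub-hypergraph $K$. Everything else is routine propagation of the closure clauses.
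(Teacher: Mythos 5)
Your proposal is correct and follows essentially the same route as the paper, whose proof is only a two-line sketch: exhibit a subnet of $H$ rooted at $x$ by a well-founded descent (the paper phrases it as induction on the length of the maximum path from $x$ to a free variable, you as descent along the $e$-targets of $\otimes$-links), then observe that a subnet of a subnet of $G$ is a subnet of $G$. You correctly identify that you cannot invoke Lemma~\ref{l:kingdom} here (this lemma is used in its proof) and instead argue directly, and your transfer argument (internal of $K$ implies internal of $H$, so the closure conditions compose) fills in precisely the step the paper dismisses as obvious.
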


\begin{proof}
It is enough to show that there is a subnet of $H$ of root $x$, since it is obvious that any subnet of $K$ is a subnet of $G$. By induction on the length of the maximum path from $x$ to a free variable of $K$.
\end{proof}

To properly describe kingdoms we need the following definition.

\begin{definition}[(free/ground) substitution]
Let $G$ be a correct net. A \deff{substitution} is an $e$-node which is the target of a $\set{\weak,\der}$-link (or, equivalently, which is not the target of a $\otimes$-link) and the source of some link. A substitution $x$ is \deff{ground} if it is a node of $\zeronet{G}$ (\ie\ it is not internal to any $\parr$-box\footnote{Note that our collapsed representation of contractions and cuts does not allow to simply say that $x$ is a node at level 0: indeed the conclusion of a $\parr$-box can have level $>0$ and yet belong to $\zeronet{G}$.}), and it is \deff{free} if it is ground and there is no ground substitution of $G$ to which $x$ has a path (in $\zeronet{G}$).
\end{definition}

\begin{lemma}[kingdom]
\label{l:kingdom}
Let $G$ be a correct net and $x\notin\fv{G}$ one of its $e$-nodes. $\kingv{x}$ is the kingdom of $x$, \ie, the smallest subnet of $G$ rooted at $x$. Moreover, it has no free substitutions, no free weakenings, and whenever $y\in\fv{\kingv{x}}$ is internal to a subnet $H$ of $G$ then $\kingv{x}\subseteq H$.
\end{lemma}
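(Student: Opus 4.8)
The plan is to prove every assertion of the lemma simultaneously, by induction on the recursive definition of $\kingv{x}$ — equivalently, on the unique outgoing link $l$ of $x$. I would start with a preliminary observation: since $x\notin\fv{G}$ is an $e$-node which, by the Exponential condition and non-isolation, has at most one outgoing link, it has exactly one, namely $l$; and since only $!$- and $\otimes$-links admit an $e$-source, $l$ falls under one of the two cases of Definition \ref{d:king}. In the $\otimes$-case the $e$-target $y$ of $l$ is a target of a $\otimes$-link, hence not a free variable of $G$ by the Conclusions condition, so $\kingv{y}$ is legitimately defined; and acyclicity of $\zeronet{G}$ makes the recursion terminate.

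In the base case ($l$ a $!$-link), $\kingv{x}$ is $l$ together with the $\der$-link or the $\parr$-box on the $m$-target $w$ of $l$, and I would check directly that this set of links is a subnet of $G$ rooted at $x$: Source and Acyclicity are inherited ($x$ is the only $e$-source, as $w$ is an $m$-node and the remaining nodes are conclusions or lie inside the $\parr$-box), Recursive correctness of a possibly present $\parr$-box comes from correctness of $G$, and the Internal/Box closure conditions hold because the only internal $e$-nodes are inside that box, where the net's own Internal closure applies. For the ``moreover'' part I would first note that, by Acyclicity and finiteness, having no \emph{free} substitution is equivalent to having no \emph{ground} substitution at all (the ground substitutions, ordered by path-reachability, form a finite poset, which is nonempty only if it has a free element); in the base case the only candidates for ground substitutions in $\kingv{x}$ are $x$ (not a target, hence not a substitution) and its conclusions (targets without an outgoing link, hence not substitutions), so there are none; ``no free weakenings'' is then exactly the Border condition.

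In the inductive case ($l$ a $\otimes$-link), $\kingv{x}$ is the union of $l$, the $\der$-link or $\parr$-box on the $m$-target of $l$, and $\kingv{y}$, which by the inductive hypothesis is a subnet of $G$ rooted at $y$ with no ground substitutions and no free weakenings. I would argue that this union is a subnet rooted at $x$: correctness is preserved since $y$ becomes internal (being a $\otimes$-target it is not a conclusion), so $x$ stays the only $e$-source, acyclicity is inherited, and recursive correctness follows from correctness of $G$ and the inductive hypothesis; and a routine check shows the ground substitutions of $\kingv{x}$ coincide with those of $\kingv{y}$ (the new $e$-node $y$ is a $\otimes$-target, the new $m$-target is an $m$-node, the conclusions newly added on the $m$-target side have no outgoing link), hence none. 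For minimality, let $H$ be a subnet of $G$ rooted at $x$: then $x\in H$ and, being non-isolated there, $H$ contains the unique outgoing link $l$ of $x$; its $m$-target lies in $H$, and by the Multiplicative condition in $H$ so does its outgoing link — the $\der$-link or $\parr$-link of $G$ — together with its box by Box closure; in the $\otimes$-case moreover $y\in H$, and since $y$ is a $\otimes$-target it is neither a conclusion nor (having an incoming link) a source of $H$, hence internal to $H$, so Lemma \ref{l:subsubnets} yields a subnet of $H$ rooted at $y$, which by the inductive hypothesis contains $\kingv{y}$. Thus $\kingv{x}\subseteq H$.

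For the internal-conclusion clause, given $y\in\fv{\kingv{x}}$ internal to a subnet $H$, I would drag $\kingv{x}$ into $H$ by propagating ``upwards'' from $y$: as an internal $e$-node of $H$, $y$ forces (Internal closure) every $G$-link with target $y$ into $H$ — in particular the $\der$-link of $\kingv{x}$ that reaches $y$, or, if $y$ is a conclusion of a $\parr$-box of $\kingv{x}$, that box itself via Box closure; the Multiplicative condition in $H$ then forces in the chain link above the relevant $m$-node, hence its $e$-source and (in the $\otimes$-case) its other $e$-target, which being a $\otimes$-target cannot be a conclusion of $H$ and so forces in the next chain link; iterating, the whole $\otimes$-chain of $\kingv{x}$ enters $H$ and $x\in H$. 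Then either $x=r_H$, and $\kingv{x}\subseteq H$ by the minimality just proved, or $x$ is internal to $H$, and Lemma \ref{l:subsubnets} plus minimality again give $\kingv{x}\subseteq H$. The hard part, I expect, will be making this upward propagation watertight — in particular ruling out that it halts at an intermediate $e$-node of the chain which happens to be the root of $H$ — because this is exactly where the collapsed treatment of contractions and the box conditions interfere; when $y$ is a contracted conclusion Internal closure also pulls in all the other $\der$-links reaching it, which should help close the argument.
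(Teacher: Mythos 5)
For the core of the lemma your argument is essentially the paper's own proof: the same induction (the paper phrases it as induction on the length of the maximal path from $x$ to a free variable, which is just your well-founded recursion on Definition \ref{d:king}), the same two cases, the same direct verification in the $!$-case, and in the $\otimes$-case the same use of Lemma \ref{l:subsubnets} to extract from $H$ a subnet rooted at $y$ before invoking the inductive hypothesis. The extra details you supply (why $y$ is internal to $H$ before Lemma \ref{l:subsubnets} applies, the reduction of ``no free substitution'' to ``no ground substitution'', which is the paper's later Lemma \ref{l:no-free-sub-no-sub}) are points the paper leaves implicit. One small slip in your preliminary observation: a $\parr$-link also has an $e$-source, namely the distinguished variable of its box, so the two cases of Definition \ref{d:king} are not exhaustive for arbitrary $e$-nodes; the definition is simply partial and the lemma is only invoked where it is defined.

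The difficulty you flag in the last clause is a genuine gap, and the paper does not close it either: its one-line treatment (``use the IH for the free variables of $\kingv{y}$'') still has to pass from $\kingv{y}\subseteq H$ to $l\in H$, which needs $y$ to be \emph{internal} to $H$, i.e.\ $y\neq r_H$. For an arbitrary subnet $H$ this can fail. Take $G=\lamtonets{(w\,(w'z))[z/v]}$, let $y$ be the $e$-target of the root $\otimes$-link and set $H=\kingv{y}\cup\kingv{z}$: this is a correct subnet rooted at $y$, the free variable $z$ of $\kingv{r_G}$ is internal to it (it is the target of a $\der$-link of $\kingv{y}$ and the source of the root link of $\lamtonets{v}$), yet the root $\otimes$-link is not in $H$. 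So the clause as stated for arbitrary $H$ is too strong, and no propagation argument can establish it. What rescues it in the only place it is used (Lemma \ref{l:sub-splitting}, where $H$ is a $\parr$-box $B$ and the free variable in question is the distinguished variable of $B$) is the Exponential condition: every intermediate $e$-node of the chain of $\kingv{x}$ is the target of a $\otimes$-link, while the root of a $\parr$-box is the target of its $\parr$-link, and no $e$-node can be the target of both; hence the upward propagation cannot halt at $r_B$ before reaching $x$. You should either restrict the clause to $\parr$-boxes or add exactly this observation; as written, neither your propagation nor the paper's induction proves the clause for arbitrary subnets.
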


\proof
Let $H$ be a correct subnet of $G$ rooted at $x$. We show by induction on the length of the maximum path from $x$ to a free variable of $G$ that $\kingv{x}\subseteq H$ and that $\kingv{x}$ is correct. Let $l$ be the link of source $x$. Cases:
\begin{itemize}
\item \deff{Base case}: $l$ is a $!$-link. By the conclusion condition $H$ has to contain the $\der$-link $i$ or the $\parr$-link on the $m$-target of $l$. In the case of a $\parr$-link the box closure condition implies that the whole box $B$ is in $H$, hence $\kingv{x}\subseteq H$. In the case of a $\der$-link correctness is obvious, in the case of a $\parr$-box it follows by the correctness of the interior of the box, guaranteed by the recursive correctness condition. Moreover, no free substitutions and no free weakenings belong to $\kingv{x}$ (boxes cannot close on weakenings). Pick $y\in\fv{\kingv{x}}$, which in the $\der$-link case is the target of $i$ and in the other case is a free variable of the $\parr$-box $B$. If $y$ is internal to $H$ then the conditions for a subnet guarantee that $i$ or $B$ are in $H$. Then clearly $\kingv{x}\subseteq H$.

\item \deff{Inductive case}: $l$ is a $\otimes$-link. As in the previous case $H$ has to contain the $\der$-link or the $\parr$-box on the $m$-target of $l$. Moreover, by lemma \ref{l:subsubnets} $H$ contains a subnet $K$ rooted in the $e$-target $y$ of $l$. By inductive hypothesis $\kingv{y}$ is the kingdom of $y$, therefore we get $\kingv{y}\subseteq K\subseteq H$. Hence $\kingv{x}\subseteq H$. By \ih\ we also get that $\kingv{y}$ is correct, hence $y$ is its only $e$-source and  $x$ is the only $e$-source of $\kingv{x}$. Acyclicity follows by correctness of $G$. Recursive correctness follows from the box closure condition and correctness of $G$. Moreover, by \ih\ $\kingv{y}$---and so $\kingv{x}$---has no free substitutions and no free weakenings. The part about free variables uses the \ih\ for the free variables of $\kingv{y}$ and the conditions for a subnet as in the previous case for the other free variables.
\qed
\end{itemize}

\begin{lemma}[substitution splitting]
\label{l:sub-splitting}
Let $G$ be a correct net with a free substitution $x$. Then 
\begin{enumerate}
\item The free variables of $\kingv{x}$ are free variables of $G$.
\item $G\setminus \kingv{x}$ is a subnet of $G$.
\end{enumerate}
\end{lemma}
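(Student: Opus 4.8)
The plan is to prove the two items in turn, leaning on Lemma~\ref{l:kingdom} (so $\kingv{x}$ is a subnet of $G$, in particular internally closed and box-closed) and on the explicit shape of $\kingv{x}$ read off Definition~\ref{d:king}: a \emph{spine} of $\otimes$-links closed by a single $!$-link, running from $x=x_0$ through $e$-nodes $x_1,\dots,x_k$, carrying on the $m$-target of each spine link either a $\der$-link or a $\parr$-box. Every node of $\kingv{x}$ other than the $x_i$'s is thus an $m$-target of a spine link, the target of such a $\der$-link, a node inside such a $\parr$-box, or a conclusion of such a box, and the free variables of $\kingv{x}$ are exactly those $\der$-targets and box-conclusions that have no outgoing link inside $\kingv{x}$. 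A preliminary observation I will use repeatedly is that $x$ lies outside every $\parr$-box: $x$ is ground, hence internal to no box; it is not the variable of a box, since that would force its outgoing link to be a $\parr$-link whereas $\kingv{x}$ is defined only when that link is a $!$- or $\otimes$-link; and it is not the root of a box, for then it would be the target of that box's $\parr$-link while also being the target of a $\der$- or $\weak$-link, against the Exponential condition. Starting from $x$ and using the Multiplicative and Internal-closure conditions one pushes this upward along the spine: none of the $x_i$, none of the $m$-targets, and none of the pendant $\der$-links can sit on the border of, or inside, a box (each would acquire a second incoming link of the wrong kind, or would drag $x$ into a box); hence the spine with its pendant $\der$-links lies at level $0$ of $G$, and consequently every pendant $\parr$-box lies at level $0$ as well.

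For item~1, let $y\in\fv{\kingv{x}}$. Since $y$ is the target of a $\der$- or $\weak$-link in $G$, it suffices to show $y$ has no outgoing link in $G$. Suppose it does; then $y$ is a substitution of $G$, and by the level-$0$ observation (a $\der$-target or box-conclusion sitting on the level-$0$ spine cannot be internal to any box) it is a \emph{ground} substitution. Moreover there is a directed path from $x$ to $y$ inside $\kingv{x}$ (down the spine to the appropriate $m$-target, then through the pendant $\der$-link, or into the pendant $\parr$-box up to the conclusion $y$); since everything it traverses is at level $0$, this path descends to a path from $x$ to $y$ in $\zeronet{G}$ --- a pendant $\parr$-box on the way is replaced by a $\square$-link, through which $y$, lying on its border, is still reached. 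So $x$ would have a path in $\zeronet{G}$ to the ground substitution $y$, contradicting the freeness of $x$. Hence $y\in\fv{G}$.

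For item~2, write $H:=G\setminus\kingv{x}$ for the set of links of $G$ not in $\kingv{x}$ and verify the clauses defining a subnet. Its root is $r_G$, which cannot belong to $\kingv{x}$ because $r_G$ has no incoming link whereas every non-root node of the net $\kingv{x}$ has one; so $r_G$ is still a source $e$-node of $H$. When the links of $\kingv{x}$ are deleted the nodes that vanish are exactly the internal nodes of $\kingv{x}$ --- for $e$-nodes this is Internal-closure of the subnet $\kingv{x}$, and each $m$-target $m_i$ loses both of its links (its incoming spine link and its outgoing $\der$- or $\parr$-link are in $\kingv{x}$) --- together with the $x_i$ for $i\ge 1$; the only conclusion newly exposed is $x$, which loses its spine link but keeps its incoming $\der$- or $\weak$-link, so $x\in\fv{H}$. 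No new source appears: an $e$-node still present in $H$ whose incoming links were all in $\kingv{x}$ is, by the case analysis above, either absorbed (hence absent from $H$) or a conclusion of $\kingv{x}$, and the latter has no outgoing link in $G$ by item~1, hence occurs in $H$ as a free variable, not a source. The Multiplicative and Exponential conditions for $H$ are inherited: deleting links only removes incoming links, and when it removes an $m$-node's outgoing link it also removes that node's incoming link. Acyclicity and the single-$e$-source condition for $\zeronet{H}$ follow because $\zeronet{H}$ is obtained from $\zeronet{G}$ by deleting $\zeronet{\kingv{x}}$, hence is a subgraph of the correct $\zeronet{G}$.

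The boxes and the remaining closure clauses rest on the dichotomy \emph{every $\parr$-box $B$ of $G$ is contained in $\kingv{x}$ or disjoint from it}: if some link of $B$ lay in $\kingv{x}$, then following incoming links with Internal-closure and the Nesting condition --- and using that the unique source $x$ of $\kingv{x}$ is outside every box --- one forces the $\parr$-link of $B$ into $\kingv{x}$, whence box-closure of $\kingv{x}$ gives $B\subseteq\kingv{x}$. Granting it, the boxes of $H$ are precisely those of $G$ disjoint from $\kingv{x}$, so Recursive correctness of $H$ follows from that of $G$; Box-closure (Root) holds since a $\parr$-link left in $H$ has its box disjoint from $\kingv{x}$; Box-closure (Free variables) is vacuous for boxes inside $\kingv{x}$ (their conclusions, by item~1, are absent from $H$ or are free variables of $H$, never internal) and immediate for boxes disjoint from $\kingv{x}$; and Internal-closure of $H$ holds because an internal $e$-node of $H$ cannot be the target of a link of $\kingv{x}$, such a node being either absorbed or a conclusion of $\kingv{x}$ and hence without an outgoing link in $G$. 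I expect the real work to be this last bundle --- the containment-or-disjointness dichotomy and the exact status of the nodes that survive the deletion --- which is conceptually straightforward but demands patiently combining the Exponential, Multiplicative, Internal-closure and Nesting conditions; the path-in-$\zeronet{G}$ argument of item~1 is the conceptual core and is short once the level-$0$ fact about the spine is available.
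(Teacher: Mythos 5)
Your overall strategy coincides with the paper's: item~1 by contradicting the freeness of $x$ via a path in $\zeronet{G}$ to another ground substitution, and item~2 by observing that item~1 prevents new $e$-sources, that $x$ being a target keeps $r_G$ alive, and that every box is either contained in or disjoint from $\kingv{x}$. Your item~2 is fine (indeed considerably more detailed than the paper's three-line argument). But item~1 has a genuine gap: you reduce everything to the case where $y$ is a \emph{ground} substitution, claiming that a free variable of $\kingv{x}$ with an outgoing link must be a node of $\zeronet{G}$. This misses the case where the outgoing link of $y$ is the $\parr$-link of a box $B$ not contained in $\kingv{x}$, i.e.\ $y$ is the \emph{distinguished} free variable of $B$. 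Such a $y$ is swallowed when $B$ is collapsed into a $\square$-link, so it is not a node of $\zeronet{G}$ and your path has no target to reach; your level-$0$ observation does not exclude this, because it only locates the links \emph{of} $\kingv{x}$, while $B$ is a box of $G$ outside $\kingv{x}$ that merely has $y$ on its border. The paper treats this as a separate case and uses the final clause of Lemma~\ref{l:kingdom} ($y$ internal to $B$ forces $\kingv{x}\subseteq B$), which makes $x$ itself fail to be a node of $\zeronet{G}$ and contradicts the groundness of $x$ --- a different contradiction from the one you derive, and one you would need to add.

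A secondary imprecision: your preliminary observation that ``$x$ lies outside every $\parr$-box'' is too strong. A ground substitution may perfectly well be a (non-distinguished) conclusion of a $\parr$-box --- the footnote to the definition of ground substitution points out exactly that a conclusion of a box can have level $>0$ and still belong to $\zeronet{G}$. This does not sink your level-$0$ claim about the spine (the outgoing spine link of $x$ cannot lie inside a box of which $x$ is a conclusion, since conclusions have no outgoing link in their box), but the case must be acknowledged for the ``push upward along the spine'' to be sound, and the same care is needed at each free variable of $\kingv{x}$ encountered along the way.
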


\begin{proof}
1) Suppose not. Then there is a free variable $y$ of $\kingv{x}$ which is not a free variable of $G$. There are two possible cases:
\begin{itemize}
\item \emph{$y$ is a substitution}. Then $x$ has a path to a substitution in $\zeronet{G}$, against the definition of free substitution, absurd.
\item \emph{$y$ is the distinguished free variable of a $\parr$-box $B$}. Thus, $y$ is internal to some $\parr$-box $B$ and so it is not a node of $\zeronet{G}$. By Lemma \ref{l:kingdom} we get that $\kingv{x}\subseteq B$ and so $x$ is not a node of $\zeronet{G}$, against the definition of free substitution, absurd.

\end{itemize}
2) By point 1 the removal of $\kingv{x}$ cannot create new $e$-sources. Being a substitution, $x$ is the target of some link. Therefore the removal of $\kingv{x}$ cannot remove the root of $G$. It is also clear that the removal cannot create cycles, and the box closure condition for subnets guarantees that the recursive correctness of $G$ implies the one of $G\setminus \kingv{x}$.
\end{proof}

\begin{lemma}
\label{l:no-free-sub-no-sub}
Let $G$ be a correct net with a ground substitution. Then $G$ has a free substitution.
\end{lemma}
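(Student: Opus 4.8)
The plan is to walk along paths joining ground substitutions inside the correction graph $\zeronet{G}$ and to use acyclicity to stop. Since $G$ is correct, $\zeronet{G}$ is acyclic, and since a net is a finite hypergraph so is $\zeronet{G}$; hence $G$ has only finitely many ground substitutions, and by hypothesis at least one. Let $S$ be the resulting finite nonempty set and define a relation $\prec$ on $S$ by $x\prec y$ iff there is a path of positive length from $x$ to $y$ in $\zeronet{G}$. Concatenation of paths makes $\prec$ transitive, and an instance $x\prec x$ would be exactly a cycle of $\zeronet{G}$, so $\prec$ is irreflexive; thus $(S,\prec)$ is a strict partial order on a finite nonempty set.

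First I would take a $\prec$-maximal element $x$ of $S$, which exists by finiteness. By maximality there is no $y\in S$ with $x\prec y$, that is, no ground substitution of $G$ is reachable from $x$ by a path of positive length in $\zeronet{G}$ (and a positive-length path from $x$ back to $x$ is excluded by acyclicity, so this is the only sensible reading of ``has a path to a ground substitution''). Since $x$ is moreover a ground substitution, it meets the definition of \emph{free} substitution, so $x$ is the free substitution we want.

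If one prefers to avoid the order-theoretic wrapping, the same argument reads as an induction on the length of the longest path issuing from $x$ in $\zeronet{G}$, which is well defined and finite because $\zeronet{G}$ is finite and acyclic: if the ground substitution $x$ is not already free it has a path to some ground substitution $x'$, which acyclicity forces to be distinct from $x$ and from which the longest path is strictly shorter, so one concludes by the induction hypothesis applied to $x'$. Either way there is no real obstacle: acyclicity of the correction graph---the Acyclicity clause of correctness---together with finiteness of $G$ does all the work, the rest being routine bookkeeping with the definitions of ground and free substitution.
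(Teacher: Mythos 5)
Your proof is correct and follows essentially the same route as the paper's: the paper also orders the set of ground substitutions by path-reachability in $\zeronet{G}$, invokes acyclicity to get a maximal element, and observes that a maximal element is by definition a free substitution. Your version merely spells out the finiteness and strict-partial-order bookkeeping (and the reading of ``has a path'' as positive length) that the paper leaves implicit.
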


\begin{proof}
Consider the following order on the elements of the set $S_g$ of ground substitutions of $G$: $z\leq y$ if there is a path from $z$ to $y$ in $\zeronet{G}$. Acyclicity of $\zeronet{G}$ implies that $S_g$ contains maximal elements with respect to $\leq$, if it is non-empty. Note that a maximal element of $S_g$  is a free substitution in $G$. Now, if $G$ has a ground substitution $x$ then $S_g$ is non-empty. Thus, $G$ has a free substitution.
\end{proof}

The next lemma is used in the proof of the sequentialization theorem.

\begin{lemma}[kingdom characterization]
\label{l:no-free-sub}
Let $G$ be a correct net. Then $G= \kingv{r_G}$ iff $G$ has no free substitutions nor free weakenings.
\end{lemma}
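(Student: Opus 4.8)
The forward implication is immediate. Since $r_G$ is a source of $G$ it is not one of its free variables, so Lemma~\ref{l:kingdom} applies with $x=r_G$ and tells us that $\kingv{r_G}$ has neither free substitutions nor free weakenings; if $G=\kingv{r_G}$ then these are literally properties of $G$.

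For the converse I would prove the contrapositive: assuming $G\neq\kingv{r_G}$, I would exhibit either a free substitution or a free weakening. Since $G$ is itself a subnet of $G$ rooted at $r_G$, the minimality part of Lemma~\ref{l:kingdom} gives $\kingv{r_G}\subsetneq G$; let $D$ be the non-empty set of links of $G$ not belonging to $\kingv{r_G}$. The first step is to control how the links of $D$ attach to $\kingv{r_G}$. By the internal-closure clause of the subnet definition no link of $D$ can target an internal node of $\kingv{r_G}$, and no link can target $r_G$ since it is a source of $G$; moreover an $m$-node is never a conclusion of a net, so every $m$-node of $\kingv{r_G}$ is internal there and carries its unique outgoing link inside $\kingv{r_G}$. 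Hence every target of a link of $D$ is either a free variable of $\kingv{r_G}$ or a node outside $\kingv{r_G}$; dually, every node outside $\kingv{r_G}$ has all its incoming links in $D$, at least one for $e$-nodes (the unique $e$-source of $G$ is $r_G$, by the \deff{Source} condition) and exactly one for $m$-nodes.

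The second step is to extract a free structure by walking backwards along links inside $D$. Consider the sub-hypergraph spanned by $D$. If it has a source $e$-node, the analysis above forces it to be a free variable $y$ of $\kingv{r_G}$ that is the source of some link of $D$; but $y$ is the target of a $\der$-link inside $\kingv{r_G}$ (recall $\kingv{r_G}$ has no free weakenings), so $y$ is a substitution of $G$. Otherwise every node of this sub-hypergraph has an incoming link in $D$, and following incoming links backwards from any link of $D$ we must eventually either reach a weakening — whose $e$-node, carrying at most that single incoming link (no link is contracted with a weakening), is then either a free variable of $G$, giving a free weakening, or the source of a link, giving a substitution — or never terminate and so trace out a cycle, which by \deff{Acyclicity} (applied to $\zeronet{G}$, descending through $\parr$-boxes by \deff{Recursive correctness} if the cycle sits at a positive level) is impossible. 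Thus in every non-absurd case $G$ has a substitution $x$; if $x$ is ground we conclude by Lemma~\ref{l:no-free-sub-no-sub} that $G$ has a free substitution, and otherwise one argues that $x$ cannot be internal to any $\parr$-box of $G$ (those that meet $\kingv{r_G}$ are wholly contained in $\kingv{r_G}$ by the kingdom definition, so cannot shield a leftover link), hence $x$ is a node of $\zeronet{G}$, \ie\ ground, and we finish as before.

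The main obstacle is exactly this last bit of graph bookkeeping: showing that the leftover links of $D$ cannot hide inside boxes and that the substitution produced is actually ground. It rests on a careful combined use of the subnet conditions, the \deff{Nesting} condition for boxes, \deff{Acyclicity} and \deff{Recursive correctness}; the remainder of the proof is a direct unwinding of the Kingdom Lemma and of Lemma~\ref{l:no-free-sub-no-sub}.
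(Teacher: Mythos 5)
Your proof is correct and follows essentially the same route as the paper: the forward direction is immediate from the Kingdom Lemma, and the converse argues that any links outside $\kingv{r_G}$ must attach at free variables of $\kingv{r_G}$ (via internal closure, the multiplicative condition, and the shared root), yielding a ground substitution to which Lemma~\ref{l:no-free-sub-no-sub} applies. The paper compresses the whole converse into two sentences, whereas you unfold the backward-walk/acyclicity bookkeeping and the weakening-rooted case explicitly; this is a faithful (and more detailed) elaboration of the same argument, not a different one.
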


\begin{proof}
$\Rightarrow$) By Lemma \ref{l:kingdom}. $\Leftarrow$) By lemma \ref{l:kingdom} we get that $\kingv{r_G}\subseteq G$. If the two do not coincide then by the internal closure condition for subnets, the multiplicative condition on nets, and the fact that they share the same root, we get that $G$ contains a ground substitution $x$ on a free variable of $\kingv{r_G}$. By lemma \ref{l:no-free-sub-no-sub} $G$ contains a free substitution, absurd.
\end{proof}

\begin{theorem}[sequentialization]
Let $G$ be a correct net and $X$ be the set of $e$-nodes of its free weakenings. Then there is a term $\tm$ s.t. $\lamtonetsvar{\tm}{X}=G$ (and $\fv{G}=\fv{\tm}\cup X$).
\end{theorem}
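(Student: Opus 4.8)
\proof
\emph{This is a proof sketch.} The plan is to extract $\tm$ by induction on the number of links of $G$, reading off the clauses of the translation in Figure~\ref{fig:trans} and letting the kingdom lemmas drive the case analysis. First one disposes of free weakenings. If $X\neq\emptyset$, pick a free weakening of $G$ with $e$-node $x\in X$; since $x$ is a free variable of $G$, the \emph{border} and \emph{exponential} conditions force $x$ to lie at level $0$ and to be the target of that weakening only, so deleting the weakening together with $x$ yields a hyper-graph $G'$ that is still a correct net (no box is affected, and no new $e$-source appears because $x$ was a target), is strictly smaller, and whose free weakenings have $e$-nodes exactly $X\setminus\set{x}$. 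By the induction hypothesis there is $\tm'$ with $\lamtonetsvar{\tm'}{X\setminus\set{x}}=G'$ and $\fv{G'}=\fv{\tm'}\cup(X\setminus\set{x})$; then $x\notin\fv{\tm'}$, so the weakening clause of the translation gives $\lamtonetsvar{\tm'}{X}=G$ and $\fv{G}=\fv{G'}\cup\set{x}=\fv{\tm'}\cup X$, and $\tm:=\tm'$ works. Hence one may assume $G$ has no free weakenings, i.e.\ $X=\emptyset$.

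Next, suppose $G$ has a free substitution $x$. Being the target of a $\set{\weak,\der}$-link, $x$ is not the root, so $\kingv{x}$ is a proper subnet; by Lemma~\ref{l:sub-splitting} the free variables of $\kingv{x}$ are free variables of $G$ and $H:=G\setminus\kingv{x}$ is a subnet, and both $\kingv{x}$ and $H$ are strictly smaller than $G$, with $\kingv{x}$ having no free weakenings by Lemma~\ref{l:kingdom}. The induction hypothesis gives $\kingv{x}=\lamtonets{u}$ with $\fv{\kingv{x}}=\fv{u}$ and $H=\lamtonetsvar{\tm_2}{X_2}$ with $\fv{H}=\fv{\tm_2}\cup X_2$, where $X_2$ collects the $e$-nodes of the free weakenings of $H$. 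In $H$ the node $x$ has lost its unique outgoing link and so is free; set $\tm:=\tm_2[x/u]$ and check, splitting on whether $x\in\fv{\tm_2}$ (then $x$ is a $\der$-target and $x\notin X_2$) or $x$ is a free weakening of $H$ (then $x\in X_2$ and the explicit-substitution clause reabsorbs that weakening), that cutting $\kingv{x}=\lamtonets{u}$ onto $H$ at $x$ gives $\lamtonetsvar{\tm_2[x/u]}{X_2\setminus\set{x}}=G$; the identity $\fv{G}=(\fv{H}\setminus\set{x})\cup\fv{\kingv{x}}$ then yields $\fv{G}=\fv{\tm}$ by a one-line computation.

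Otherwise $G$ has neither free weakenings nor free substitutions, so $G=\kingv{r_G}$ by Lemma~\ref{l:no-free-sub}, and it remains to inspect the link $l$ of source $r_G$, which is a $!$- or a $\otimes$-link. If $l$ is a $!$-link, its $m$-target carries either a $\der$-link --- then $G$ is the net of the variable $y$ at the target of that $\der$-link, and $\tm:=y$ --- or a $\parr$-box $B$, whose interior is correct by recursive correctness and strictly smaller, hence equal by induction to $\lamtonetsvar{\tm_B}{Y}$ with $Y\subseteq\set{x}$ for the variable $x$ of $B$, and then $\tm:=\l x.\tm_B$. If $l$ is a $\otimes$-link with $e$-target $y$, then $\kingv{y}$ is a strictly smaller subnet without free weakenings, so $\kingv{y}=\lamtonets{\tm_y}$ by induction, and the $m$-target of $l$ carries either a $\der$-link with target an $e$-node $z$, giving $\tm:=z\,\tm_y$, or a $\parr$-box $B'$ with variable $x'$ and interior $\lamtonetsvar{\tm_{B'}}{Y'}$, giving $\tm:=(\l x'.\tm_{B'})\,\tm_y$. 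In each of these cases $\lamtonetsvar{\tm}{X}=G$ is read directly off Figure~\ref{fig:trans} and $\fv{G}=\fv{\tm}\cup X$ follows by the same bookkeeping.

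I expect the real work to be precisely this free-variable and free-weakening bookkeeping: tracking where the free variables of a box, of a kingdom $\kingv{x}$, and of the residual subnet $G\setminus\kingv{x}$ end up, and verifying that the refined translation $\lamtonetsvar{\cdot}{\cdot}$ reinstates exactly the weakenings it should. By contrast the structural ingredients are already in place: correctness of every smaller net handed to the induction hypothesis follows from the subnet lemmas, from recursive correctness, and from Lemma~\ref{l:sub-splitting}; exhaustiveness of the case analysis on $l$ (and on the $m$-target of a $!$- or $\otimes$-link) is exactly the kingdom definition together with the remark on induced $!$-boxes; and the induction is well-founded since each recursive call is to a box interior, to a kingdom $\kingv{y}$ or $\kingv{x}$, or to $G$ minus such a kingdom (or minus a free weakening), all strictly smaller than $G$.
\qed
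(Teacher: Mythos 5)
Your proposal is correct and follows essentially the same route as the paper: induction on the number of links, first peeling off free weakenings, then splitting off the kingdom of a free substitution via Lemmas \ref{l:kingdom} and \ref{l:sub-splitting}, and finally using Lemma \ref{l:no-free-sub} to reduce to a case analysis on the root link ($!$ over $\der$, $!$ over $\parr$-box, or $\otimes$). The only detail the paper makes explicit that you gloss over is the check, in the $\otimes$-case, that the $\parr$-box on the $m$-target and the kingdom of the $e$-target intersect only on free variables (via the nesting condition), which is part of the bookkeeping you defer.
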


\proof
By induction on the number of links. By the root and conclusion conditions the minimum number of links is 2 and the two links are necessarily a $!$-link on top of a $\der$-link. Let $x$ be the $e$-node of the $\der$-link. Then $\lamtonets{x}=G$. We now present each inductive case. After the first one we assume that the net has no free weakening.
\begin{itemize}
\item \textit{There is a free weakening $l$ of $e$-node $y$}. Then $G'=G\setminus \set{l}$ is still a correct net and by \ih\ there exist $t$ s.t. $\lamtonetsvar{\tm}{X\setminus\set{y}}=G'$. Then $\lamtonetsvar{\tm}{X}=G$.
\item \textit{There is a free substitution $x$}. Then by Lemma \ref{l:kingdom} and Lemma \ref{l:sub-splitting} $\kingv{x}$ and $G\setminus\kingv{x}$ are correct subnets of $G$. By the \ih\ there exist $\tmtwo$ and $\tmthree$ s.t. $\lamtonets{\tmtwo}=\kingv{x}$ and $\lamtonetsvar{\tmthree}{\set{x}}=G\setminus\kingv{x}$ (note that if $x\in \fv{\tmthree}$ then $\lamtonetsvar{\tmthree}{\set{x}}=\lamtonetsvar{\tmthree}{\emptyset}=\lamtonets{\tmthree}$). Then $\lamtonets{\tmthree\esub{\var}{\tmtwo}}=G$.
\item \textit{No free substitution}: by lemma \ref{l:no-free-sub} $G=\kingv{r_G}$. In case the root link $l$ of $G$ is:
\begin{itemize}
\item \textit{a $!$-link over a $\der$-link}: base case, already treated.
\item \textit{a $!$-link over a $\parr$-link}: let $H$ be the box of the $\parr$-link and $x$ its distinguished free variable. By definition of a net the set of free weakenings of $H$ either is empty or it contains only $x$. If $x$ is (resp. is not) the node of a free weakening then by \ih\ there exists $t$ s.t. $\lamtonetsvar{t}{\set{x}}=H$ (resp. $\lamtonets{t}=H$). Then $\lamtonets{\l x.t}=G$.
\item \textit{A $\otimes$-link $l$}: let $x$ be its $e$-target and $a$ its $m$-target. Note that $G=\kingv{r_G}$ implies that $G$ is composed by $l$, $\kingv{x}$ and either the $\der$-link or the $\parr$-link (plus its box) on $a$. By \ih\ there exists $\tmtwo$ s.t. $\lamtonets{\tmtwo}=\kingv{x}$. Now, if $a$ is the source of a $\der$-link of $e$-node $y$ we conclude, since $\lamtonets{y \tmtwo}= G$. Otherwise, $s$ is the source of a $\parr$ of box $H$ and the \ih\ gives a term $\tmthree$ and a set $X$ s.t. $\lamtonetsvar{\tmthree}{X}=H$. Let us prove that $H$ and $\kingv{x}$ can only share free variables, as the translation prescribes: no link at level $0$ of $\kingv{x}$ can be in $H$, and no box at level 0 of $\kingv{x}$ can intersect $H$ other than on free variables, by the nesting condition. By reasoning about the distinguished free variable of $H$ as in the previous case we then get $\lamtonets{(\l y. \tmthree) \tmtwo}=G$.
\qed
\end{itemize}
\end{itemize}

\section{Proof nets: dynamics}
\label{s:pn-dynamics}
\begin{figure}[t]
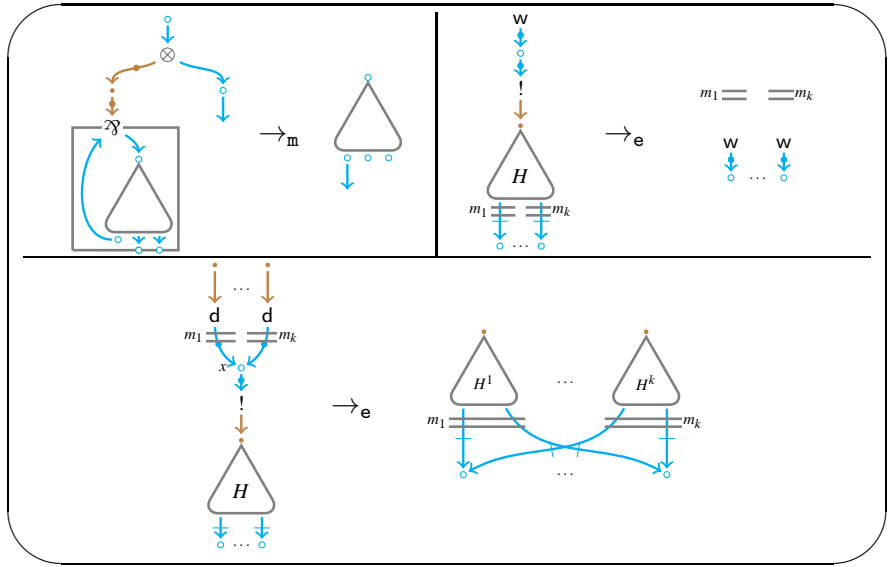
 
\centering\ovalbox{
\begin{tabular}{cccc}
\begin{tabular}{c|ccc}
\myinput{imm-B-rule}&\myinput{imm-red-weakening}
\end{tabular}
\\\hline
\myinput{imm-pure-contraction}
\end{tabular}
}
\caption{proof nets cut-elimination rules\label{fig:rew-rules}}
\end{figure}

The rewriting rules are in Figure \ref{fig:rew-rules}. Let us explain them. First of all, note that the notion of cut in our syntax is implicit, because cut-links are not represented explicitly. A cut is given by a node whose incoming and outgoing connections are principal (\ie\ with a little square on the line). 

The rule $\tom$ is nothing but the usual elimination of a multiplicative cut, except that the step also opens the box associated with the $\parr$-link.

The two $\toe$ rules reduce the exponential redexes. Let us explain how to read them. For the graph noted $H$ in Figure \ref{fig:rew-rules} there are two possibilities: either it is simply a dereliction link (a $\der$-link) or it is a $\parr$ with its box, so there is no ambiguity on what to duplicate/erase. Every pair of short gray lines denotes the sequence (of length $m_i$, with $i\in\set{1,\ldots, k}$) of boxes closing on the corresponding links. The rule has two cases, one where $!$ is cut with $k\in\set{1,2,\ldots}$ derelictions and one where it is cut with a weakening. In the first case the sub-graph $H$ is copied $k$ times (if $k=1$ no copy is done) into $H^1,\ldots H^k$ and each copy enters in the $m_i$ boxes enclosing the corresponding (and removed) dereliction. Moreover, the $k$ copies of each target of $H$ are contracted together, \ie\ the nodes are merged. In the case of a cut with a weakening, $H$ is erased and replaced by a set of weakenings, one for every target of $H$. Note that the weakenings are also pushed out of all boxes closing on the targets of $H$\footnote{Note that, for the sake of a simple representation, the figure of the weakening cut-elimination rule is slightly wrong: it is not true that the links $l_1,\ldots,l_j$ having as target a given conclusion $x_i$ of $H$ are all inside $m_i$ boxes, because each one can be inside a different number of boxes.}. This is done to preserve the invariant that weakening are always pushed out of boxes as much as possible. Such invariant is also used  in the rule: the weakening is at the same level of $H$. Last, if the weakenings created by the rule are contracted with any other link then they are removed on the fly (because by definition weakenings cannot be contracted).

Now, we establish the relationship between terms and nets at the level of reduction. Essentially, there is only one fact which is not immediate, namely that $\toe$ actually implements the $\toe$ rule on terms, as it is proved by the following lemma.

\begin{lemma}[substitution]
\label{l:graph-sub}
Let $t=s[x/\val\L]$ then $\lamtonetsvar{t}{X}\toe\lamtonetsvar{s\set{x/\val}\L}{X}$ for any set of names $X\supseteq \fv{t}$.
\end{lemma}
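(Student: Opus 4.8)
The plan is to locate in $\lamtonetsvar{t}{X}$ a single $\toe$-redex --- the $!$-link at the root of $\lamtonets{\val}$ cut against the $e$-node $x$ of the translation of $s$ --- fire it, and prove by structural induction on $s$ that the reduct is $\lamtonetsvar{s\set{x/\val}\L}{X}$. The first move is to handle the list $\L=[x_1/u_1]\cdots[x_k/u_k]$: since we translate a well-named representative, none of $x_1,\dots,x_k$ occurs in $s$, so $k$ applications of the $\eqvo$-clause $r[z/p[w/q]]\eqvo r[z/p][w/q]$ (licit because $w\notin\fv{r}$) give $s[x/\val\L]\eqvo s[x/\val]\L$, whence $\lamtonetsvar{s[x/\val\L]}{X}=\lamtonetsvar{s[x/\val]\L}{X}$ by Remark~\ref{rem:not-inj}. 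In this net the links of $\L$ form cuts on free variables of $\lamtonets{s}$ and of $\lamtonets{\val}$ but are not part of the redex, so I may reason about the redex as if $\L$ were absent, reattaching the $\L$-cuts to the corresponding nodes of the reduct at the end.

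Then I identify and fire the redex. A value is translated as a $!$-link over a $\der$-link or over a $\parr$-box, so $\lamtonets{\val}$ carries a $!$-link at its root, which in $\lamtonets{s[x/\val]}$ is cut against the $e$-node $x$; this node is either the common target of the $k\ge1$ $\der$-links of the free occurrences of $x$ in $s$, or --- when $x\notin\fv{s}$ --- the node of a weakening. Either way this is a $\toe$-redex, and firing it performs exactly the operation of Figure~\ref{fig:rew-rules}: delete each $\der$-link on $x$ and replace it by a fresh copy of the interior of $\lamtonets{\val}$, pushed into the boxes that enclosed that $\der$-link, with the conclusions of the $k$ copies contracted together; or, in the weakening case, erase $\lamtonets{\val}$ and replace it by weakenings on its conclusions, deleting on the fly any weakening that lands on a contracted node.

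It remains to show by induction on $s$ that the reduct is the translation of $s\set{x/\val}$. If $s=x$ there is one $\der$-link, at level $0$, no box-pushing occurs, and plugging the interior of $\lamtonets{\val}$ below the root yields $\lamtonets{\val}=\lamtonets{x\set{x/\val}}$. If $x\notin\fv{s}$ (in particular if $s$ is a variable $\ne x$) the erasing case applies; since $\fv{\val}\subseteq\fv{t}\subseteq X$ the new weakenings are already accounted for, and the net is the translation of $s=s\set{x/\val}$. If $x\in\fv{s}$ and $s$ is $\l y. s'$, $\val' u$ or $s'[y/r]$, then the $\der$-links on $x$ are partitioned among the sub-nets translating the immediate subterms of $s$ that contain $x$; restricting the fired redex to each such sub-net produces exactly the reduct of the analogous $\toe$-redex inside it, so the induction hypothesis rewrites that sub-net to the translation of the correspondingly substituted subterm, and regluing gives the translation of $s\set{x/\val}$. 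For $\l y. s'$ one additionally notes that the $\der$-links on $x$ are precisely those inside the single $\parr$-box of the abstraction, so the rule's box-pushing places the copies of $\lamtonets{\val}$ where $\lamtonets{\l y.(s'\set{x/\val})}$ has them; for an application one uses that the translation of an application contracts the common free variables of its two sub-nets, which matches the rule's contraction of the conclusions of the copies of $\lamtonets{\val}$ coming from the two sides.

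The overall architecture is routine; the real work --- and the main obstacle --- is the bookkeeping in the last step: checking that the contractions created when the $k$ copies of $\lamtonets{\val}$ are merged (possibly with copies contributed by different subterms) are exactly those that the translation of $s\set{x/\val}$ produces, that weakenings remain pushed out of every box (the border invariant on nets), that the free-weakening sets track correctly through the decomposition, and that the $\L$-cuts land on the right nodes of the reduct. These verifications go through only because of the tailored features of this presentation --- the refined translation $\lamtonetsvar{\cdot}{X}$, the collapsing of contractions onto $e$-nodes, and the $!$-boxes induced by $\parr$-boxes --- which is presumably why the text singles this lemma out as the one fact which is not immediate.
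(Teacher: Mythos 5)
Your overall architecture matches the paper's: both first replace $s[x/\val\L]$ by $s[x/\val]\L$ via the quotient of Remark~\ref{rem:not-inj}, then fire the single exponential cut between the $!$-link of $\lamtonets{\val}$ and the node $x$, and argue by induction that the reduct is $\lamtonetsvar{s\set{x/\val}\L}{X}$. The $n=0$ (weakening) and single-occurrence structural cases you sketch are essentially the paper's.

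The gap is in how you handle multiple free occurrences of $x$. You run one structural induction on $s$ and, when $x$ occurs in several immediate subterms (e.g.\ $s=(\l y.w)u$ with $x$ free in both $w$ and $u$), you say the $\der$-links on $x$ are ``partitioned among the sub-nets'' and that ``restricting the fired redex to each such sub-net'' lets you invoke the induction hypothesis per subterm. But the redex cannot be restricted to a subnet: $\lamtonets{s[x/\val]}$ contains a \emph{single} copy of $\lamtonets{\val}$, cut against one contracted $e$-node whose incoming $\der$-links live in different subnets, whereas your induction hypothesis is about nets in which $\val$ is cut against $x$ \emph{inside} one subnet with its own private copy. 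Justifying the passage from the one big-step copy-and-contract reduction to a composition of per-subnet reductions, each with its own copy of $\lamtonets{\val}$, is exactly the non-immediate content of the lemma, and you defer it as ``bookkeeping.'' The paper discharges it by an extra layer of induction on the number $n$ of occurrences of $x$: for $n>1$ it writes $s=C[x]$, passes to $C[y][y/\val][x/\val]$ (which translates to the net where one occurrence has been split off and cut against a separate copy of $\lamtonets{\val}$), checks that eliminating the cut on $y$ and then the residual cut on $x$ yields the same net as the single big-step reduction, and applies the induction hypothesis twice; only in the base case $n=1$ does it perform the structural induction on $s$, where each occurrence lies in exactly one subterm and the per-subnet restriction is legitimate. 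That occurrence-splitting device (or an equivalent commutation argument) is the missing idea in your proposal.
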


\begin{proof}
First of all observe that $t$ and $s[x/\val]L$ both reduce to $s\set{x/\val}L$ and by remark \ref{rem:not-inj} both translate to the same net. Hence it is enough to prove that $\lamtonetsvar{s[x/\val]L}{X}\Rew{e}\lamtonetsvar{s\set{x/\val}L}{X}$. We prove it by induction on the number $k$ of substitutions in $L$. If $k=0$ then the proof is by induction on the number $n$ of free occurrences of $x$ in $s$. Cases:
\begin{itemize}
\item $n=0$) In $\lamtonetsvar{s[x/\val]}{X}$ the bang associated to $\val$ is cut with a weakening. The elimination of the cut gets a net $G'$ without the $!$-link and the $\parr$-box associated to $\val$, leaving a free weakening for every free variable of the box, \ie\ of every free variable of $\val$: then $G'$ is exactly $\lamtonetsvar{s\set{x/\val}}{X\cup\fv{\val}}=\lamtonetsvar{s}{X\cup\fv{\val}}$.

\item $n>1$) Write $s=C[x]$ for some occurrence of $x$. Now, consider  $\tmthree=C[y][y/\val][x/\val]$ and note that:
\begin{center}
$\tmthree\Rew{}C[\val][x/\val]\Rew{}C[\val]\set{x/\val}=s\set{x/\val}$
\end{center}
The difference between $G'=\lamtonetsvar{\tmthree}{X}$ and $G=\lamtonetsvar{s[x/\val]}{X}$ is that one of the occurrences of $x$ in $G$ has been separated from the others and cut with a copy of $\lamtonets{\val}$. Consider the step $G\Rew{}H$ which reduces the cut on $x$ in $G$ and the sequence $G'\Rew{} H'_y \Rew{} H'_{y,x}$ which first reduces the cut on $y$ in $G'$ and then reduces in $H'$ the (unique) residual of the cut on $x$ in $G'$. By the definition of reduction in nets $H=H'_{y,x}$. Now by \ih\ applied to $\tmthree$ and $y$ we get that $\lamtonetsvar{C[\val][x/\val]}{X}=H'_y$ and by the \ih\ applied to $C[\val][x/\val]$ and $x$ we get that $\lamtonetsvar{C[\val]\set{x/\val}}{X}=H'_{y,x}$. From $H=H'_{y,x}$ and $C[\val]\set{x/\val}=s\set{x/\val}$ we get $\lamtonetsvar{s\set{x/\val}}{X}=H$ and conclude.

\item $n= 1$) By induction on $s$. Some cases:
\begin{itemize}
\item If $t= \l y. u$ then by \ih\ $\lamtonetsvar{u[x/\val]}{X\cup\set{y}}\Rew{e}\lamtonetsvar{u\set{x/\val}}{X\cup\set{y}}$ and so we get $\lamtonetsvar{\l y. (u[x/\val])}{X\cup\set{y}}\Rew{e}\lamtonetsvar{\l y. (u\set{x/\val})}{X\cup\set{y}}$. Now, observe that $\l y. (u\set{x/\val})=(\l y. u)\set{x/\val}=t\set{x/\val}$ and that the two nets $\lamtonetsvar{\l y. (u[x/\val])}{X\cup\set{y}}$ and $\lamtonetsvar{(\l y. u)[x/\val]}{X\cup\set{y}}$ have the same reduct after firing the exponential cut on $x$, and so we get $\lamtonetsvar{(\l y. u)[x/\val]}{X\cup\set{y}}\Rew{e}\lamtonetsvar{(\l y. u)\set{x/\val})}{X\cup\set{y}}$.
\item If $s=w[y/u]$ then either $x\in u$ or $x\in w$. In the first case by remark \ref{rem:not-inj} we get that $\lamtonetsvar{s[x/\val]}{X}=\lamtonetsvar{w[y/u][x/\val]}{X}=\lamtonetsvar{w[y/u[x/\val]]}{X}$. Now by \ih\ $\lamtonets{u[x/\val]}\Rew{e} \lamtonets{u\set{x/\val}}$. Then we have $\lamtonetsvar{s[x/\val]}{X}\Rew{e} \lamtonetsvar{w[y/u\set{x/\val}]}{X}=\lamtonetsvar{w[y/u]\set{x/\val}}{X}=\lamtonetsvar{s\set{x/\val}}{X}$. The second case is analogous.
\item If $s= (\l y.w)u$. The case $x\in u$ uses remark \ref{rem:not-inj} and the \ih\ as in the $s=w[y/u]$ case. The case $x\in w$ is slightly different. As before $((\l y.w)u)[x/\val]$ and $((\l y.w[x/\val])u)$ have the same reduct. By \ih\ hypothesis $\lamtonets{w[x/\val]}\Rew{e}\lamtonets{w\set{x/\val}}$ and thus $\lamtonetsvar{(\l y.w[x/\val])u}{X}\Rew{e}\lamtonetsvar{(\l y.w\set{x/\val})u}{X}$. We conclude since $\lamtonetsvar{((\l y.w)u)[x/\val]}{X}\Rew{e}\lamtonetsvar{((\l y.w\set{x/\val})u)}{X}=\lamtonetsvar{((\l y.w)u)\set{x/\val}}{X}$.
\end{itemize}

\end{itemize}
If $k>0$ and $\L=\L'[y/r]$ then we get by \ih\ that $\lamtonetsvar{s[x/\val]\L'}{X}\Rew{e}\lamtonetsvar{s\set{x/\val}\L'}{X}$. By definition of the translation and of graph reduction it follows that $\lamtonetsvar{s[x/\val]\L'[y/r]}{X}\Rew{e}\lamtonetsvar{s\set{x/\val}\L'[y/r]}{X}$.
\end{proof}

\begin{theorem}[strong bisimulation]
\label{tm:str-bis}
Let $t$ be a term and $X$ a set of variables containing $\fv{t}$. The translation is a strong bisimulation between $t$ and $\lamtonetsvar{t}{X}$, \ie\ $t\Rew{a} t'$ if and only if $\lamtonetsvar{t}{X}\Rew{a} \lamtonetsvar{t'}{X}$, for $a\in\set{\msym,\esym}$.
\end{theorem}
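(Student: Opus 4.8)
The plan is to prove the two implications separately, reducing the ``only if'' direction (reflection) to the ``if'' direction (simulation).

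\emph{Simulation}: if $t\Rew{a}t'$ then $\lamtonetsvar{t}{X}\Rew{a}\lamtonetsvar{t'}{X}$. I would argue by induction on the term context $C$ such that $t=C[\rho]$, with $\rho$ the fired redex. In the base case $C$ is empty. If $a=\esym$ then $\rho=s[x/\val\L]$ and the claim is exactly Lemma~\ref{l:graph-sub}. If $a=\msym$ then $\rho=(\l x.s)u$, and a direct inspection of the clauses of Figure~\ref{fig:trans} shows that $\lamtonetsvar{(\l x.s)u}{X}$ contains exactly one multiplicative cut --- the $\otimes$-link of the application meeting the $\parr$-box coming from $\l x.s$ --- and that eliminating it with the $\tom$ rule (which also opens that box) produces precisely $\lamtonetsvar{s[x/u]}{X}$. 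For the inductive step I would go through the context constructors ($\l x.C'$, $\val\,C'$, $C'[x/u]$, $s[x/C']$), using that the translation is defined compositionally: $\lamtonetsvar{C[\rho]}{X}$ is obtained by plugging $\lamtonetsvar{\rho}{Y}$, for a suitable $Y\supseteq\fv{\rho}$ determined by $C$ and $X$, into a fixed ``context net'', and the cut-elimination step acts only on the plugged subnet together with the contractions and weakenings adjacent to it. Hence the \ih\ gives $\lamtonetsvar{\rho}{Y}\Rew{a}\lamtonetsvar{\rho'}{Y'}$ and plugging this step back in yields $\lamtonetsvar{C[\rho]}{X}\Rew{a}\lamtonetsvar{C[\rho']}{X}=\lamtonetsvar{t'}{X}$.

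\emph{Reflection}: if $\lamtonetsvar{t}{X}\Rew{a}G'$ then $t\Rew{a}t'$ with $G'=\lamtonetsvar{t'}{X}$ for some $t'$. Write $G=\lamtonetsvar{t}{X}$; the given step eliminates a cut $c$ of $G$. The key ingredient is a lemma, proved by structural induction on $t$: every cut of $\lamtonetsvar{t}{X}$ comes from a redex of $t$ of the matching kind --- multiplicative cuts ($\otimes$ against $\parr$) exactly from the $\msym$-redexes $(\l x.s)u$ of $t$, and exponential cuts ($!$ against $\der$ or $\weak$) exactly from the $\esym$-redexes $s[x/\val\L]$ of $t$. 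The proof inspects the translation clauses, using the ``induced $!$-boxes'' observation: at a substitution node the root $!$-link of $\lamtonetsvar{u}{\cdot}$ meets a dereliction or weakening iff $u$ is, up to $\eqvo$, of the shape $\val\L$ (e.g.\ not when the head of $u$ is an application). Let $\rho$ be a redex of $t$ corresponding to $c$; applying the simulation direction to $\rho$ gives $t\Rew{a}t'$ with $\lamtonetsvar{t}{X}\Rew{a}\lamtonetsvar{t'}{X}$, and by construction of that proof this net step is exactly the elimination of $c$. Since the result of eliminating a fixed cut is uniquely determined (including the on-the-fly removal of weakenings that get contracted), $G'=\lamtonetsvar{t'}{X}$.

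\emph{Main obstacle}: the hard part is the structural-induction lemma of the reflection step, for three reasons. (i) One must pin down exactly when the translation exposes a cut --- in particular that an explicit substitution $s[x/u]$ does so iff $u$ has the shape $\val\L$ --- which forces a careful reading of the translation of applications and explicit substitutions. (ii) Care is needed because $\lamtonets{\cdot}$ is not injective: the correspondence between redexes and cuts must be understood modulo the equivalence $\eqvo$ of Remark~\ref{rem:not-inj}, so that $t'$ is chosen coherently among $\eqvo$-equivalent terms. (iii) The free-weakening set $X$ must be tracked through the contextual closure, since an $\esym$ step can erase variable occurrences and thereby create free weakenings that have to be pushed out of boxes (and merged with, or deleted against, the surrounding net) --- this bookkeeping is precisely what the refined translation $\lamtonetsvar{\cdot}{\cdot}$ is designed to absorb. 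By contrast the multiplicative side is routine: the $\tom$ rule mirrors $(\l x.s)u\mapsto_{\msym}s[x/u]$ by construction, box-opening included.
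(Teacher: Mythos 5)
Your proposal is correct and follows essentially the same route as the paper: a structural induction (yours phrased on the context enclosing the redex, the paper's directly on the translation) whose only non-trivial ingredients are Lemma~\ref{l:graph-sub} for a root $\esym$-step and direct inspection of the translation for a root $\msym$-step (your list of context constructors omits the case $(\l x.C')u$ of a redex inside the function part of an application, but the argument is unchanged). The paper folds both directions into a single ``iff'' at each case of its induction, whereas you isolate the reflection direction via an explicit cut-classification lemma; that lemma is left implicit in the paper, and your formulation of it (multiplicative cuts are exactly the $\otimes$/$\parr$ pairs coming from $\msym$-redexes, exponential cuts exactly the $!$ against $\der$ or $\weak$ at substitutions on $\val\L$) is the correct way to make that half precise.
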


\begin{proof}
By induction on the translation. If $t=x$ there is nothing to prove, and if $t=\l x.s$ or $t=xs$ it immediately follows by the \ih, since all the redexes of $t$ are contained in $s$. If $t=s[x/u]$ and the redex is in $s$ or $u$ then just apply the \ih. If $u=\val\L$ and the redex is $s[x/\val\L]\toe s\isub{x}{\val}\L$  then apply Lemma \ref{l:graph-sub}. If $t=(\l x. s)u$ and the redex is in $s$ or $u$ then just apply the \ih. If $t=(\l x. s)u\tom s[x/u]=t'$ then have a look at Figure \ref{fig:counter}.a: clearly $t\tom t'$ iff $\lamtonetsvar{t}{X}\tom \lamtonetsvar{t'}{X}$.
\end{proof}

\begin{figure}[h]
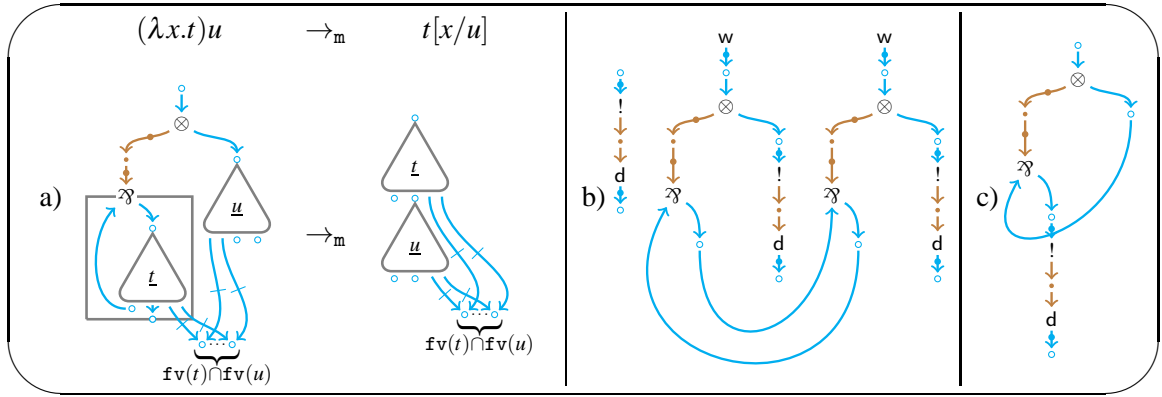

\centering\myinput{counter-examples}
\caption{a) A $\tom$-step on terms and on nets. b-c) Counter-examples to correctness without $\parr$-boxes\label{fig:counter}}
\end{figure}

Strong bisimulations preserve reduction lengths, so they preserve divergent/normalizing reductions, and termination properties in general. 

\emph{Technical digression about confluence}. For confluence the point is slightly more delicate, since in general it is preserved only modulo the quotient induced by the strong bisimulation. But mild additional hypothesis allow to transfer confluence. Given two rewriting systems $(S_1, \to)$ and $(S_2, \leadsto)$ and a strong bisimulation $\equiv$ (defined on all terms of $S_1$ and $S_2$),  to transfer confluence from $S_1$ to $S_2$ it is enough to ask that if $s_1\equiv s_2$ and $s_1\to s_1'$ then there is a unique $s_2'$ s.t. $s_2\leadsto s_2'$ and $s_2\equiv s_2'$, see \cite{phdaccattoli} (pp. 83-86) for more details. It is easily seen that in our case the translation enjoys this property in both directions.

These observations (and confluence of $\lvker$) prove:

\begin{corollary}
Let $t\in\lvker$ and $X$ a set of variables. Then $t$ is weakly normalizing/strongly normalizing/a normal form/without a normal form iff $\lamtonetsvar{t}{X}$ is. Moreover, proof nets are confluent.
\end{corollary}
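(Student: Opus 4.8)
The plan is to read the Corollary off the strong bisimulation of Theorem~\ref{tm:str-bis} together with the two general observations recalled just above its statement. First note that we may assume $X\supseteq\fv{t}$: if not, replace $X$ by $X\cup\fv{t}$, which by the definition of $\lamtonetsvar{\cdot}{\cdot}$ denotes literally the same net, so nothing is lost. Then Theorem~\ref{tm:str-bis}, read over all terms at once, exhibits the translation as a strong bisimulation $\equiv$ between the full rewriting system $\lvker$ and the full rewriting system of proof nets (in particular a cut-elimination step from a proof net lands again on a proof net, namely on the translation of the corresponding term-reduct, since that is exactly what the ``viceversa'' clause of a bisimulation asks for). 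Since a strong bisimulation preserves reduction lengths in both directions, it maps (maximal) reduction sequences to (maximal) reduction sequences of the same length, and this correspondence is the only tool needed.

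For the four normalization statements I would argue directly. Any reduction $t=t_0\to t_1\to\cdots$ lifts, step by step, to $\lamtonetsvar{t_0}{X}\to\lamtonetsvar{t_1}{X}\to\cdots$; conversely, any reduction $\lamtonetsvar{t}{X}=G_0\to G_1\to\cdots$ is the image of a reduction $t=t_0\to t_1\to\cdots$ with $\lamtonetsvar{t_i}{X}=G_i$ for all $i$ (using that each $G_i$ is again a proof net). Hence: $t$ has an infinite reduction iff $\lamtonetsvar{t}{X}$ does, which by contraposition yields strong normalization in both directions; $t$ has a finite maximal reduction iff $\lamtonetsvar{t}{X}$ does, yielding weak normalization; $t$ has no redex iff $\lamtonetsvar{t}{X}$ has no cut --- the length-$0$ instance --- yielding the normal-form case; and $t$ reaches no normal form iff all its maximal reductions are infinite iff the same holds for $\lamtonetsvar{t}{X}$, yielding the last case. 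For labels, recall that $a$ ranges over $\set{\msym,\esym}$, so no distinction between reduction kinds is needed here.

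For confluence I would invoke the technical digression preceding the statement: $\lvker$ is confluent by~\cite{AccLinearity}, and to transfer confluence along $\equiv$ to the net system it is enough that whenever $s_1\equiv s_2$ and $s_1\to s_1'$ there is a \emph{unique} $s_2'$ with $s_2\to s_2'$ and $s_1'\equiv s_2'$ (and symmetrically); granting this, proof nets are confluent. This uniqueness is the point deserving care, and thus the main --- though mild --- obstacle: on the term side it holds because pairwise distinct redexes of $t$ fire pairwise distinct cuts of $\lamtonetsvar{t}{X}$, so a step is determined by the cut it reduces; on the net side it holds because, by the sequentialization analysis, each cut of a proof net is the image of a uniquely determined term redex. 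Everything else is routine bookkeeping with the bisimulation.
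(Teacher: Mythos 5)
Your proposal is correct and follows exactly the route the paper intends: the corollary is stated without a separate proof, being justified precisely by the preceding observations that strong bisimulations preserve reduction lengths (hence all four normalization properties) and that confluence transfers along a strong bisimulation under the uniqueness condition, which the paper declares ``easily seen'' and which you substantiate in the same spirit as the dynamic isomorphism theorem. Your preliminary remark that one may assume $X\supseteq\fv{t}$, and your care that every net-reduct of a proof net is again the translation of a term, are sensible explicit versions of points the paper leaves implicit.
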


Actually, the translation is more than a strong bisimulation: the reduction graphs\footnote{\emph{Reduction graphs}, which are the graphs obtained considering all reductions starting from a given object, \textit{are not nets}.} of $t$ and $\lamtonets{t}$ are \textit{isomorphic}, not just strongly bisimilar. An easy but tedious refinement of the proof of Theorem \ref{tm:str-bis} proves:

\begin{theorem}[dynamic isomorphism]
Let $t$ be a term and $X$ a set of variables containing $\fv{t}$. The translation induces a bijection $\phi$ between the redexes of $t$ and the redexes of $\lamtonetsvar{t}{X}$ s.t. $R: t\Rew{a} t'$ if and only if $\phi(R): \lamtonetsvar{t}{X}\Rew{a} \lamtonetsvar{t'}{X}$, where $a\in\set{\msym,\esym}$.
\end{theorem}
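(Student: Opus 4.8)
The plan is to refine the induction behind Theorem~\ref{tm:str-bis} so that, alongside the bisimulation, it produces the bijection $\phi$ explicitly. The first step is to record a \emph{redex dictionary}: by structural induction on $t$ (using the general shape of translated nets from Figure~\ref{fig:various}) one checks that every cut of $\lamtonetsvar{t}{X}$ is either a \emph{multiplicative cut} --- one of those eliminated by $\tom$, the $\otimes$-versus-$\parr$(-box) configuration of Figure~\ref{fig:counter}.a --- or an \emph{exponential cut} --- one of those eliminated by $\toe$, a $!$-link meeting a $\der$- or a $\weak$-link --- and that no $e$-node carries two distinct cuts. Hence any reduction-preserving $\phi$ must send $\tom$-redexes of $t$ to multiplicative cuts and $\toe$-redexes to exponential cuts, and it suffices to produce one such $\phi$ and check it is a bijection compatible with one-step reduction.

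Next I would build $\phi=\phi_t$ by induction on $t$, following exactly the case split in the proof of Theorem~\ref{tm:str-bis}. If $t=x$ there are neither redexes nor cuts. If $t=\l x.s$ or $t=x\,s$, the translation creates no cut at top level --- directly, or because by Theorem~\ref{tm:str-bis} a new top-level cut would require a new top-level term redex and there is none --- so the cuts of $\lamtonetsvar{t}{X}$ are exactly those of the translated subterm, every redex of $t$ sits inside that subterm, and $\phi_t$ is inherited. If $t=s[x/u]$: redexes inside $s$ or $u$ are covered by the inductive hypothesis; if moreover $u=\val\L$ the translation introduces exactly one new cut, the exponential cut at the node $x$ between the $!$-link coming from $\lamtonets{\val}$ and the dereliction/weakening (or the contraction) at $x$, and we set $\phi_t$ of the redex $s[x/\val\L]\toe s\isub{x}{\val}\L$ to be that cut; Lemma~\ref{l:graph-sub} is precisely the statement that firing it yields $\lamtonetsvar{s\isub{x}{\val}\L}{X}$. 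If $t=(\l x.s)u$: inside redexes come from the inductive hypothesis, and the head redex $(\l x.s)u\tom s[x/u]$ is mapped to the unique multiplicative cut formed by the application's $\otimes$-link against $\lamtonets{\l x.s}$ (Figure~\ref{fig:counter}.a), the last case of the proof of Theorem~\ref{tm:str-bis} showing that firing it produces $\lamtonetsvar{s[x/u]}{X}$.

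Then I would check that $\phi$ is a bijection and that it is natural with respect to reduction. Injectivity follows from the same induction: distinct redex occurrences of $t$ lie in disjoint subterms and are therefore indexed by cuts located in subnets of $\lamtonetsvar{t}{X}$ that overlap only on conclusion $e$-nodes, which are never cuts. Surjectivity is the other half of the dictionary: at each construction step the only cut created is the one assigned to the head redex, so every cut of $\lamtonetsvar{t}{X}$ lies in the image of $\phi_t$. Finally, the equivalence $R: t\Rew{a} t'$ iff $\phi(R): \lamtonetsvar{t}{X}\Rew{a}\lamtonetsvar{t'}{X}$ is Theorem~\ref{tm:str-bis} read one redex at a time: $R$ falls into one of the cases above and, by construction, $\phi_t(R)$ is exactly the cut fired in the corresponding net step; conversely every net step fires a cut which, by surjectivity, is $\phi_t(R)$ for a unique $R$, and that step is matched on terms by $R$.

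The main obstacle is the non-injectivity of the translation recorded in Remark~\ref{rem:not-inj}: since $t\eqvo s$ forces $\lamtonetsvar{t}{X}=\lamtonetsvar{s}{X}$, the maps $\phi_t$ and $\phi_s$ have to be literally equal, so one must verify that each rewriting generating $\eqvo$ --- $\preeqw{\vosym_\CS}$, $\preeqw{\vosym_1}$, $\preeqw{\vosym_2}$ --- only relocates the syntactic occurrence of a redex while preserving the very cut node it is sent to. Threading this coherence through the whole induction (it is already used tacitly in the proof of Lemma~\ref{l:graph-sub}) is the tedious point. A related item to watch: a $\toe$-redex $s[x/\val\L]$ with $\L$ nonempty still corresponds to a single exponential cut, since the substitutions in $\L$ contribute subnets but no extra cut at $x$, so the correspondence remains one-to-one.
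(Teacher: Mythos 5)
Your proposal is correct and takes exactly the route the paper intends: the paper gives no explicit proof, stating only that the theorem follows by ``an easy but tedious refinement of the proof of Theorem \ref{tm:str-bis}'', and your construction of $\phi$ by the same induction, with the cut/redex dictionary and the $\eqvo$-coherence check, is precisely that refinement spelled out. The one point worth tightening is the injectivity argument: nested redexes do not lie in disjoint subterms, so the cleaner statement is that distinct redexes are anchored at distinct nodes (distinct substitution $e$-nodes for $\toe$, distinct $\otimes$--$\parr$ $m$-nodes for $\tom$), hence map to distinct cuts.
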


A nice by-product of the strong bisimulation approach is that preservation of correctness by reduction \textit{comes for free}, since any reduct of a proof-net is the translation of a term.

\begin{corollary}[preservation of correctness]
Let $G$ be a proof net and $G\Rew{} G'$. Then $G'$ is correct.
\end{corollary}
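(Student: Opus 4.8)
The plan is to deduce this immediately from the strong bisimulation, reducing it to the already-proved fact that every proof net is correct; no separate induction on the cut-elimination rules is needed, which is exactly the sense in which preservation of correctness ``comes for free''.

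First I would unfold the hypothesis that $G$ is a proof net: by definition this means $G=\lamtonetsvar{t}{X}$ for some term $t$ and some finite set of variable names $X\supseteq\fv{t}$ (for instance $X$ the set of $e$-nodes of the free weakenings of $G$, or simply $X=\fv{t}$).

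Second, I would analyse the step $G\Rew{} G'$, which fires exactly one redex of $G$. By the dynamic isomorphism theorem the redexes of $G=\lamtonetsvar{t}{X}$ are exactly the $\phi(R)$ for $R$ a redex of $t$, and $R:t\Rew{a}t'$ (with $a\in\set{\msym,\esym}$) entails $\phi(R):\lamtonetsvar{t}{X}\Rew{a}\lamtonetsvar{t'}{X}$; since the net produced by a step is determined by the redex fired, this forces $G'=\lamtonetsvar{t'}{X}$. (Alternatively one may invoke only Theorem \ref{tm:str-bis}: since $\phi$ is a bijection onto the redexes of $G$, every one-step reduct of a translated net is again a translated net.) Hence $G'=\lamtonetsvar{t'}{X}$ is a proof net, and by the Lemma asserting that every proof net is correct, $G'$ is correct. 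For a multi-step reduction $G\Rew{}^{*}G'$ one simply iterates this argument.

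The only point that requires care — and it is precisely what the strong bisimulation delivers — is that firing a cut never leaves the image of the translation: a priori a reduct of a translated net could fail to itself be a translation, which would break the argument, but the surjectivity of $\phi$ onto the redexes of $G$ rules this out. So there is no genuine obstacle here; the mathematical content is entirely absorbed into the earlier theorems, and this corollary is merely their repackaging.
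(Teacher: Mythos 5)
Your proposal is correct and is exactly the paper's argument: the corollary is stated as an immediate by-product of the strong bisimulation, since any reduct of a translated net is again the translation of a term and hence correct by the lemma that every proof net is correct. Your extra care in invoking the bijection $\phi$ of the dynamic isomorphism theorem to ensure that \emph{every} net-side redex of $G$ comes from a term-side redex (so that $G'$ really is $\lamtonetsvar{t'}{X}$) is a sensible explicit justification of a point the paper leaves implicit.
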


\paragraph{The original boring translation.} For the sake of completeness, Figure \ref{fig:ord-trans} sketches the ordinary CBV translation from $\l$-terms (possibly with iterated applications) to proof nets (including the case for explicit substitutions and using a traditional syntax with boxes on $!$).  An easy computation shows that the term $t=\delta (yz) \delta$, where $\delta=\l x. xx$ maps to a net without normal form, while $t$ is a $\lv$-normal form (see \cite{AccLinearity} for more details). This mismatch is the motivation behind our work.
\begin{figure}[t]
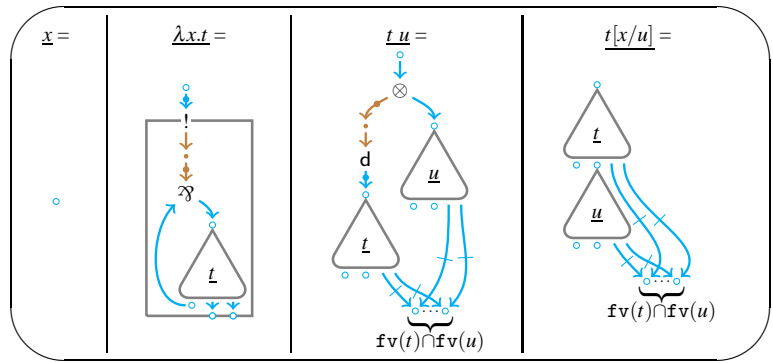

\centering
\myinput{imm-ordinary-translation}
\caption{\label{fig:ord-trans} the ordinary CBV translation from terms to nets.}
\end{figure}

\section{Motivating \mathintitle{$\parr$}{par}-boxes}
\label{s:par}
The two encodings of $\l$-calculus can be seen as fragments of Intuitionistic Multiplicative and Exponential Linear Logic (IMELL). Let us stress that in IMELL what we noted $\otimes$ and $\parr$ correspond to the right and left rules for the linear implication $\multimap$, and not to the left and right rules for $\otimes$ (the four rules for $\otimes$ and $\multimap$ are collapsed in LL but not in Intuitionistic LL, in particular our $\parr$ acts on the output of the term, \ie\ on the right of the sequent, and corresponds to the right rule for $\multimap$). 

Our argument is that in IMELL there is no correctness criterion unless the syntax is extended with boxes for both $!$ \emph{and} $\multimap$ (our $\parr$), as we shall explain in the next paragraphs. The fragment of IMELL encoding the CBN $\l$-calculus is a special case where the box for $\multimap$ needs not to be represented. The fragment encoding the CBV $\l$-calculus is a special case where the box for $!$ needs not to be represented. So, the two encodings are dual with respect to the use of boxes, and then there is nothing exotic in our use of $\parr$-boxes.

The difficulty of designing a correctness criterion for IMELL is given by the presence of weakenings, which break connectedness. In most cases weakenings simply prevent the possibility of a correctness criterion. The fragment encoding the CBN $\l$-calculus, and more generally Polarized Linear Logic, are notable exceptions. For the encoding of the CBN $\l$-calculus there exist two correctness criteria. Let us show that none of them works for the CBV $\l$-calculus.

The first is the Danos-Regnier criterion, in the variant replacing connectedness with the requirement that the number of connected components of every switching graph is $1+\#w$, where $\#w$ is the number of weakenings at level 0 (after the collapse of $!$-boxes) \cite{Reg:Thesis:92}. In our case this criterion does not work: the net in Fig. \ref{fig:counter}.b verifies the requirement while it does not represent any proof or term.
The second criterion is Olivier Laurent's polarized criterion, because the CBN encoding is polarized. In its original formulation it cannot be applied to the encoding of the CBV $\l$-calculus, because such a fragment is not polarized (there can be a weakening as a premise of a tensor, which is forbidden in polarized logic).
Our re-formulation of Laurent's criterion rejects the net in Figure \ref{fig:counter}.b (because the two $\parr$-links form a cycle), but without using $\parr$-boxes it would accept the net in Figure \ref{fig:counter}.c, which is not correct\footnote{The net in Figure \ref{fig:counter}.c would be rejected by the original version of the criterion, which is based on a different orientation. But the original orientation cannot be applied to our fragment.}.

Thus, the known criteria do not work and there is no criteria for IMELL. The usual way to circumvent problems about correctness is to add some information to the graphical representation, under the form of boxes (as we did) or jumps (\ie\ additional connections). It is well known that in these cases various criteria can be used, but this extra information either is not canonical or limits the degree of parallelism. Another possible solution is to modify the logical system adding the mix rules. However, such rules are debatable, and also give rise to a bad notion of subnet (for details see \cite{phdaccattoli}, pp. 199-201).

Let us stress that our counter-examples to the known criteria do not rely on the exponentials (\ie\ non-linearity): it is easy to reformulate them in Intuitionistic Multiplicative Linear Logic (IMLL) with units\footnote{Just replace each sequence of a ! over a dereliction with an axiom, and the weakenings with $\bot$-links.}, for which then there is no correctness criterion. 

In the case studied in this paper the use of $\parr$-boxes does not affect the level of parallelism in a sensible way. Indeed, in IMELL the parallelism given by proof nets concerns the left rules (of $\otimes$ and $\multimap$, plus contractions and weakenings) and cuts: in our case there is no $\otimes$ (remember our $\otimes$ and $\parr$ rather correspond to the rules for $\multimap$), our technical choices for variables keep the parallelism for contraction and weakenings, and the parallelism of the left rule for $\multimap$ (our $\otimes$) and cuts is preserved (it is given by the equations in (\ref{eq:quotient}), page \pageref{eq:quotient}).
\section{Proof nets: the literature on term representations}
\label{s:history}
When relating $\l$-terms and proof nets a number of technical choices are possible:
\begin{enumerate}
\item \emph{Explicit substitutions}: proof nets implement a $\beta$-step by two cut-elimination steps. This refined evaluation can be seen on the calculus only if the syntax is extended with explicit substitutions.
\item \emph{Variables}: to properly represent variables it is necessary to work modulo associativity and commutativity of contractions, neutrality of weakening with respect to contraction, and permutations of weakenings and contractions with box-borders. In the literature there are two approaches: to explicitly state all these additional congruences or to use a syntax naturally quotienting with respect to them. Such a syntax uses n-ary $?$-links collapsing weakening, dereliction and contractions and delocalizing them out of boxes. It is sometimes called \emph{nouvelle syntaxe}.
\item \emph{Axioms}: various complications arise if proof nets are presented with  explicit axiom and cut links. They can be avoided by working modulo cuts on axioms, which is usually done by employing an interaction nets presentation of proof nets.
\item \emph{Exponential cut-elimination}: the cut-elimination rules for the exponentials admit many presentations. Essentially, either they are big-step, \ie\ an exponential cut is eliminated in one shot (making many copies of the $!$-premise of the cut), or they are small-step, with a rule for each possible $?$-premise (weakening, dereliction, contraction, axiom, box auxiliary port).
\end{enumerate}

We now list the works in the literature which are closer in spirit to ours, \ie\ focusing on the representation of $\l$-calculi into proof nets (and for space reasons we omit many other interesting works, as for instance \cite{DBLP:conf/ifl/Mackie05}, which studies the representation of \emph{strategies}, not of \emph{calculi}). The first such works were the Ph.D. thesis of Vincent Danos \cite{Danos:Thesis:90} and Laurent Regnier \cite{Reg:Thesis:92}, which focused on the call-by-name (CBN) translation. Danos and Regnier avoid explicit substitutions, use n-ary contractions, explicit axioms, and big-step exponential rules, see also \cite{Danos95proof-netsand}. They  characterize the image of the translation using the variant on the Danos-Regnier correcteness criterion which requires that any switching graph has $\# w+1$ connected components, where $\# w$ is the number of weakenings. In \cite{DBLP:journals/tcs/DanosR99} Danos and Regnier use the CBV translation\footnote{Let us point out that \cite{DBLP:journals/tcs/DanosR99} presents an oddity that we believe deserves to be clarified. The authors show that an optimized geometry of interaction for the proof nets of the CBV-translation is isomorphic to Krivine'
s abstract machine (KAM): this is quite puzzling, because 
the KAM is CBN, while they use the CBV translation.}. Both translations are injective.

In \cite{DBLP:journals/tcs/Laurent03,phdlaurent} Olivier Laurent extends the CBN translation to represent (the CBN) $\l\mu$-calculus. He does not use explicit substitutions nor n-ary $?$-links, while he employs explicit axiom links and small-step exponential rules. His work presents two peculiar  points. First, the translation of $\l\mu$-terms is not injective, because---depending on the term---the $\mu$-construct may have no counterpart on proof nets. This induces some mismatches at the dynamic level. Second, Laurent finds a simpler criterion, exploiting the fact that the fragment encoding (the CBN) $\l\mu$-calculus is polarized. In \cite{phdlaurent} Laurent also show how to represent the CBV $\l\mu$-calculus. However, such a representation does not use the same types of the boring translation, as $A\to B$ maps to $?!(A\multimap B)$, and not to $!(A\multimap B)$.

Lionel Vaux \cite{phdvaux} and Paolo Tranquilli \cite{tranquillithesis,DBLP:journals/tcs/Tranquilli11} study the relationship between the differential $\l$-calculus and differential proof nets. Vaux also extends the relationship to the classical case (thus encompassing a differential $\l\mu$-calculus), while Tranquilli refines the differential calculus into a \emph{resurce calculus} which better matches  proof nets. They do not use explicit substitutions, nor n-ary contractions, while they use interaction nets (so no explicit axioms and cut link) and small-step exponential rules. Both Tranquilli and Vaux rely on the Danos-Regnier criterion, despite the fragment encoding their calculi is polarized and can be captured using Laurent's criterion by using boxes for coderelictions; in the context of $\l$-calculus such boxes do not reduce the parallelism of the representation.

Delia Kesner and co-authors \cite{DBLP:conf/lics/CosmoK97,DBLP:journals/mscs/CosmoKP03,DBLP:journals/iandc/KesnerL07} study the relationship with explicit substitutions (in the CBN case). The main idea here is that explicit substitutions correspond to exponential cuts. They use explicit axiom links and small-step exponential rules, but they do not employ n-ary contractions (and so they need additional rules and congruences). Because of explicit substitutions the translation is not injective: now different terms may map to the same proof net, as in this paper. They do not deal with correctness.

In none of these works the translation is a strong bisimulation. In \cite{DBLP:conf/csl/AccattoliG09} the author and Stefano Guerrini use a syntax inspired by proof nets (and extended with jumps) to represent the CBN $\l$-calculus with explicit substitutions. That work is the only one employing (the equivalent of) n-ary $?$-links and (the equivalent of) small-step exponential rules. In \cite{DBLP:conf/csl/AccattoliG09} the correctness criterion is a variation over Lamarche's criterion for essential nets, which relies in an essential way on the use of jumps. A reformulation in the syntactic style of this paper of both \cite{DBLP:conf/csl/AccattoliG09} and of Danos and Regnier's proof nets for the CBN $\l$-calculus can be found in \cite{phdaccattoli}, together with a detailed account of the strong bisimulation.

Here, hypergraphs allow us to use n-ary $?$-links and collapse axioms and cut links (as if we were using interaction nets). More precisely, we represent n-ary $?$-links by allowing $e$-nodes to have more than one incoming link. This choice overcomes some technicalities about \emph{gluing} and \emph{de-gluing} of $?$-links. Such technicalities are always omitted, but they are in fact necessary to properly define subnets and cut-elimination. We also employ big-step exponential rules and explicit substitutions.
\ms

\textbf{Acknowledgements.}
To Stefano Guerrini, for introducing me to proof nets, correctness and the representation of $\l$-terms, 
and to Delia Kesner, for helping with the financial support of this work.

{

\begin{thebibliography}{10}
\providecommand{\bibitemdeclare}[2]{}
\providecommand{\surnamestart}{}
\providecommand{\surnameend}{}
\providecommand{\urlprefix}{Available at }
\providecommand{\url}[1]{\texttt{#1}}
\providecommand{\href}[2]{\texttt{#2}}
\providecommand{\urlalt}[2]{\href{#1}{#2}}
\providecommand{\doi}[1]{doi:\urlalt{http://dx.doi.org/#1}{#1}}
\providecommand{\bibinfo}[2]{#2}

\bibitemdeclare{phdthesis}{phdaccattoli}
\bibitem{phdaccattoli}
\bibinfo{author}{Beniamino \surnamestart Accattoli\surnameend}
  (\bibinfo{year}{2011}): \emph{\bibinfo{title}{Jumping around the box:
  graphical and operational studies on $\l$-calculus and Linear Logic}}.
\newblock \bibinfo{type}{Ph{D} thesis}, \bibinfo{school}{\textit{La Sapienza}
  {U}niversity of {R}ome}.

\bibitemdeclare{misc}{AccLinearity}
\bibitem{AccLinearity}
\bibinfo{author}{Beniamino \surnamestart Accattoli\surnameend}
  (\bibinfo{year}{2012}): \emph{\bibinfo{title}{A linear analysis of
  call-by-value $\l$-calculus}}.
\newblock \bibinfo{note}{Available at the address
  \url{https://sites.google.com/site/beniaminoaccattoli/cbv-analysis.pdf?attredirects=0}}.

\bibitemdeclare{inproceedings}{DBLP:conf/csl/AccattoliG09}
\bibitem{DBLP:conf/csl/AccattoliG09}
\bibinfo{author}{Beniamino \surnamestart Accattoli\surnameend} \&
  \bibinfo{author}{Stefano \surnamestart Guerrini\surnameend}
  (\bibinfo{year}{2009}): \emph{\bibinfo{title}{Jumping Boxes}}.
\newblock In: {\sl \bibinfo{booktitle}{CSL}}, pp. \bibinfo{pages}{55--70}.
\newblock \urlprefix\url{http://dx.doi.org/10.1007/978-3-642-04027-6_7}.

\bibitemdeclare{inproceedings}{DBLP:conf/csl/AccattoliK10}
\bibitem{DBLP:conf/csl/AccattoliK10}
\bibinfo{author}{Beniamino \surnamestart Accattoli\surnameend} \&
  \bibinfo{author}{Delia \surnamestart Kesner\surnameend}
  (\bibinfo{year}{2010}): \emph{\bibinfo{title}{The Structural
  $\lambda$-Calculus}}.
\newblock In: {\sl \bibinfo{booktitle}{CSL}}, pp. \bibinfo{pages}{381--395}.
\newblock \urlprefix\url{http://dx.doi.org/10.1007/978-3-642-15205-4_30}.

\bibitemdeclare{inproceedings}{DBLP:conf/flops/AccattoliP12}
\bibitem{DBLP:conf/flops/AccattoliP12}
\bibinfo{author}{Beniamino \surnamestart Accattoli\surnameend} \&
  \bibinfo{author}{Luca \surnamestart Paolini\surnameend}
  (\bibinfo{year}{2012}): \emph{\bibinfo{title}{Call-by-Value Solvability,
  revisited}}.
\newblock In: {\sl \bibinfo{booktitle}{FLOPS}}, pp. \bibinfo{pages}{4--16}.
\newblock \urlprefix\url{http://dx.doi.org/10.1007/978-3-642-29822-6_4}.

\bibitemdeclare{inproceedings}{DBLP:conf/lics/CosmoK97}
\bibitem{DBLP:conf/lics/CosmoK97}
\bibinfo{author}{Roberto~Di \surnamestart Cosmo\surnameend} \&
  \bibinfo{author}{Delia \surnamestart Kesner\surnameend}
  (\bibinfo{year}{1997}): \emph{\bibinfo{title}{Strong Normalization of
  Explicit Substitutions via Cut Elimination in Proof Nets (Extended
  Abstract)}}.
\newblock In: {\sl \bibinfo{booktitle}{LICS}}, pp. \bibinfo{pages}{35--46}.
\newblock
  \urlprefix\url{http://doi.ieeecomputersociety.org/10.1109/LICS.1997.614927}.

\bibitemdeclare{article}{DBLP:journals/mscs/CosmoKP03}
\bibitem{DBLP:journals/mscs/CosmoKP03}
\bibinfo{author}{Roberto~Di \surnamestart Cosmo\surnameend},
  \bibinfo{author}{Delia \surnamestart Kesner\surnameend} \&
  \bibinfo{author}{Emmanuel \surnamestart Polonovski\surnameend}
  (\bibinfo{year}{2003}): \emph{\bibinfo{title}{Proof Nets And Explicit
  Substitutions}}.
\newblock {\sl \bibinfo{journal}{Math. Str. in Comput. Sci.}}
  \bibinfo{volume}{13}(\bibinfo{number}{3}), pp. \bibinfo{pages}{409--450}.
\newblock \urlprefix\url{http://dx.doi.org/10.1017/S0960129502003791}.

\bibitemdeclare{phdthesis}{Danos:Thesis:90}
\bibitem{Danos:Thesis:90}
\bibinfo{author}{Vincent \surnamestart Danos\surnameend}
  (\bibinfo{year}{1990}): \emph{\bibinfo{title}{La Logique Lin\'eaire
  appliqu\'e \`a l'\'etude de divers processus de normalisation (principalment
  du $\l$-calcul)}}.
\newblock \bibinfo{type}{Phd thesis}, \bibinfo{school}{Universit{\'e} Paris 7}.

\bibitemdeclare{inproceedings}{Danos95proof-netsand}
\bibitem{Danos95proof-netsand}
\bibinfo{author}{Vincent \surnamestart Danos\surnameend} \&
  \bibinfo{author}{Laurent \surnamestart Regnier\surnameend}
  (\bibinfo{year}{1995}): \emph{\bibinfo{title}{Proof-nets and the {H}ilbert
  space}}.
\newblock In: {\sl \bibinfo{booktitle}{Advances in Linear Logic}},
  \bibinfo{publisher}{Cambridge University Press}, pp.
  \bibinfo{pages}{307--328}.
\newblock \urlprefix\url{http://dx.doi.org/10.1017/CBO9780511629150.016}.

\bibitemdeclare{article}{DBLP:journals/tcs/DanosR99}
\bibitem{DBLP:journals/tcs/DanosR99}
\bibinfo{author}{Vincent \surnamestart Danos\surnameend} \&
  \bibinfo{author}{Laurent \surnamestart Regnier\surnameend}
  (\bibinfo{year}{1999}): \emph{\bibinfo{title}{Reversible, Irreversible and
  Optimal lambda-Machines}}.
\newblock {\sl \bibinfo{journal}{Theor. Comput. Sci.}}
  \bibinfo{volume}{227}(\bibinfo{number}{1-2}), pp. \bibinfo{pages}{79--97}.
\newblock \urlprefix\url{http://dx.doi.org/10.1016/S0304-3975(99)00049-3}.

\bibitemdeclare{inproceedings}{DBLP:conf/gg/FernandezM02}
\bibitem{DBLP:conf/gg/FernandezM02}
\bibinfo{author}{Maribel \surnamestart Fern{\'a}ndez\surnameend} \&
  \bibinfo{author}{Ian \surnamestart Mackie\surnameend} (\bibinfo{year}{2002}):
  \emph{\bibinfo{title}{Call-by-Value lambda-Graph Rewriting Without
  Rewriting}}.
\newblock In: {\sl \bibinfo{booktitle}{ICGT}}, pp. \bibinfo{pages}{75--89}.
\newblock \urlprefix\url{http://dx.doi.org/10.1007/3-540-45832-8_8}.

\bibitemdeclare{inproceedings}{DBLP:journals/corr/abs-1003-5515}
\bibitem{DBLP:journals/corr/abs-1003-5515}
\bibinfo{author}{Maribel \surnamestart Fern{\'a}ndez\surnameend} \&
  \bibinfo{author}{Nikolaos \surnamestart Siafakas\surnameend}
  (\bibinfo{year}{2009}): \emph{\bibinfo{title}{Labelled Lambda-calculi with
  Explicit Copy and Erase}}.
\newblock In: {\sl \bibinfo{booktitle}{LINEARITY}}, pp.
  \bibinfo{pages}{49--64}.
\newblock \urlprefix\url{http://dx.doi.org/10.4204/EPTCS.22.5}.

\bibitemdeclare{inproceedings}{DBLP:conf/pldi/FlanaganSDF93}
\bibitem{DBLP:conf/pldi/FlanaganSDF93}
\bibinfo{author}{Cormac \surnamestart Flanagan\surnameend},
  \bibinfo{author}{Amr \surnamestart Sabry\surnameend},
  \bibinfo{author}{Bruce~F. \surnamestart Duba\surnameend} \&
  \bibinfo{author}{Matthias \surnamestart Felleisen\surnameend}
  (\bibinfo{year}{1993}): \emph{\bibinfo{title}{The Essence of Compiling with
  Continuations}}.
\newblock In: {\sl \bibinfo{booktitle}{PLDI}}, pp. \bibinfo{pages}{237--247}.
\newblock \urlprefix\url{http://doi.acm.org/10.1145/155090.155113}.

\bibitemdeclare{article}{DBLP:journals/tcs/Girard87}
\bibitem{DBLP:journals/tcs/Girard87}
\bibinfo{author}{Jean-Yves \surnamestart Girard\surnameend}
  (\bibinfo{year}{1987}): \emph{\bibinfo{title}{Linear Logic}}.
\newblock {\sl \bibinfo{journal}{Theoretical Computer Science}}
  \bibinfo{volume}{50}, pp. \bibinfo{pages}{1--102}.
\newblock \urlprefix\url{http://dx.doi.org/10.1016/0304-3975(87)90045-4}.

\bibitemdeclare{inproceedings}{DBLP:conf/tlca/HerbelinZ09}
\bibitem{DBLP:conf/tlca/HerbelinZ09}
\bibinfo{author}{Hugo \surnamestart Herbelin\surnameend} \&
  \bibinfo{author}{St{\'e}phane \surnamestart Zimmermann\surnameend}
  (\bibinfo{year}{2009}): \emph{\bibinfo{title}{An Operational Account of
  Call-by-Value Minimal and Classical lambda-Calculus in "Natural Deduction"
  Form}}.
\newblock In: {\sl \bibinfo{booktitle}{TLCA}}, pp. \bibinfo{pages}{142--156}.
\newblock \urlprefix\url{http://dx.doi.org/10.1007/978-3-642-02273-9_12}.

\bibitemdeclare{article}{DBLP:journals/iandc/KesnerL07}
\bibitem{DBLP:journals/iandc/KesnerL07}
\bibinfo{author}{Delia \surnamestart Kesner\surnameend} \&
  \bibinfo{author}{St{\'e}phane \surnamestart Lengrand\surnameend}
  (\bibinfo{year}{2007}): \emph{\bibinfo{title}{Resource operators for
  lambda-calculus}}.
\newblock {\sl \bibinfo{journal}{Inf. Comput.}}
  \bibinfo{volume}{205}(\bibinfo{number}{4}), pp. \bibinfo{pages}{419--473}.
\newblock \urlprefix\url{http://dx.doi.org/10.1016/j.ic.2006.08.008}.

\bibitemdeclare{inproceedings}{DBLP:conf/tlca/Laurent99}
\bibitem{DBLP:conf/tlca/Laurent99}
\bibinfo{author}{Olivier \surnamestart Laurent\surnameend}
  (\bibinfo{year}{1999}): \emph{\bibinfo{title}{Polarized Proof-Nets:
  Proof-Nets for LC}}.
\newblock In: {\sl \bibinfo{booktitle}{TLCA}}, pp. \bibinfo{pages}{213--227}.
\newblock \urlprefix\url{http://dx.doi.org/10.1007/3-540-48959-2_16}.

\bibitemdeclare{phdthesis}{phdlaurent}
\bibitem{phdlaurent}
\bibinfo{author}{Olivier \surnamestart Laurent\surnameend}
  (\bibinfo{year}{2002}): \emph{\bibinfo{title}{{\'E}tude de la polarisation en
  logique}}.
\newblock \bibinfo{type}{Th\`ese de doctorat}, \bibinfo{school}{{U}niversit\'e
  {A}ix-{M}arseille~{II}}.

\bibitemdeclare{article}{DBLP:journals/tcs/Laurent03}
\bibitem{DBLP:journals/tcs/Laurent03}
\bibinfo{author}{Olivier \surnamestart Laurent\surnameend}
  (\bibinfo{year}{2003}): \emph{\bibinfo{title}{Polarized proof-nets and
  $\lambda\mu$-calculus}}.
\newblock {\sl \bibinfo{journal}{Theor. Comput. Sci.}}
  \bibinfo{volume}{290}(\bibinfo{number}{1}), pp. \bibinfo{pages}{161--188}.
\newblock \urlprefix\url{http://dx.doi.org/10.1016/S0304-3975(01)00297-3}.

\bibitemdeclare{inproceedings}{DBLP:conf/ifl/Mackie05}
\bibitem{DBLP:conf/ifl/Mackie05}
\bibinfo{author}{Ian \surnamestart Mackie\surnameend} (\bibinfo{year}{2005}):
  \emph{\bibinfo{title}{Encoding Strategies in the Lambda Calculus with
  Interaction Nets}}.
\newblock In: {\sl \bibinfo{booktitle}{IFL}}, pp. \bibinfo{pages}{19--36}.
\newblock \urlprefix\url{http://dx.doi.org/10.1007/11964681_2}.

\bibitemdeclare{article}{DBLP:journals/tcs/MaraistOTW99}
\bibitem{DBLP:journals/tcs/MaraistOTW99}
\bibinfo{author}{John \surnamestart Maraist\surnameend},
  \bibinfo{author}{Martin \surnamestart Odersky\surnameend},
  \bibinfo{author}{David~N. \surnamestart Turner\surnameend} \&
  \bibinfo{author}{Philip \surnamestart Wadler\surnameend}
  (\bibinfo{year}{1999}): \emph{\bibinfo{title}{Call-by-name, Call-by-value,
  Call-by-need and the Linear lambda Calculus}}.
\newblock {\sl \bibinfo{journal}{Theor. Comput. Sci.}}
  \bibinfo{volume}{228}(\bibinfo{number}{1-2}), pp. \bibinfo{pages}{175--210}.
\newblock \urlprefix\url{http://dx.doi.org/10.1016/S0304-3975(98)00358-2}.

\bibitemdeclare{article}{DBLP:journals/tcs/Plotkin75}
\bibitem{DBLP:journals/tcs/Plotkin75}
\bibinfo{author}{Gordon~D. \surnamestart Plotkin\surnameend}
  (\bibinfo{year}{1975}): \emph{\bibinfo{title}{Call-by-Name, Call-by-Value and
  the lambda-Calculus}}.
\newblock {\sl \bibinfo{journal}{Theor. Comput. Sci.}}
  \bibinfo{volume}{1}(\bibinfo{number}{2}), pp. \bibinfo{pages}{125--159}.
\newblock \urlprefix\url{http://dx.doi.org/10.1016/0304-3975(75)90017-1}.

\bibitemdeclare{article}{DBLP:journals/mscs/PravatoRR99}
\bibitem{DBLP:journals/mscs/PravatoRR99}
\bibinfo{author}{Alberto \surnamestart Pravato\surnameend},
  \bibinfo{author}{Simona Ronchi~Della \surnamestart Rocca\surnameend} \&
  \bibinfo{author}{Luca \surnamestart Roversi\surnameend}
  (\bibinfo{year}{1999}): \emph{\bibinfo{title}{The call-by-value
  $\l$-calculus: a semantic investigation}}.
\newblock {\sl \bibinfo{journal}{Math. Str. in Comput. Sci.}}
  \bibinfo{volume}{9}(\bibinfo{number}{5}), pp. \bibinfo{pages}{617--650}.
\newblock \urlprefix\url{http://dx.doi.org/10.1017/S0960129598002722}.

\bibitemdeclare{phdthesis}{Reg:Thesis:92}
\bibitem{Reg:Thesis:92}
\bibinfo{author}{Laurent \surnamestart Regnier\surnameend}
  (\bibinfo{year}{1992}): \emph{\bibinfo{title}{Lambda-calcul et r{\'e}seaux}}.
\newblock \bibinfo{type}{Ph{D} thesis}, \bibinfo{school}{Univ. Paris VII}.

\bibitemdeclare{article}{DBLP:journals/lisp/SabryF93}
\bibitem{DBLP:journals/lisp/SabryF93}
\bibinfo{author}{Amr \surnamestart Sabry\surnameend} \&
  \bibinfo{author}{Matthias \surnamestart Felleisen\surnameend}
  (\bibinfo{year}{1993}): \emph{\bibinfo{title}{Reasoning about Programs in
  Continuation-Passing Style}}.
\newblock {\sl \bibinfo{journal}{Lisp and Symbolic Computation}}
  \bibinfo{volume}{6}(\bibinfo{number}{3-4}), pp. \bibinfo{pages}{289--360}.
\newblock \urlprefix\url{http://dx.doi.org/10.1007/BF01019462}.

\bibitemdeclare{phdthesis}{tranquillithesis}
\bibitem{tranquillithesis}
\bibinfo{author}{Paolo \surnamestart Tranquilli\surnameend}
  (\bibinfo{year}{2009}): \emph{\bibinfo{title}{Nets Between Determinism and
  Nondeterminism}}.
\newblock \bibinfo{type}{Ph.{D}. thesis}, \bibinfo{school}{{U}niversit\`a degli
  {S}tudi {R}oma {T}re/{U}niversit\'e {P}aris {D}iderot ({P}aris 7)}.

\bibitemdeclare{article}{DBLP:journals/tcs/Tranquilli11}
\bibitem{DBLP:journals/tcs/Tranquilli11}
\bibinfo{author}{Paolo \surnamestart Tranquilli\surnameend}
  (\bibinfo{year}{2011}): \emph{\bibinfo{title}{Intuitionistic differential
  nets and lambda-calculus}}.
\newblock {\sl \bibinfo{journal}{Theor. Comput. Sci.}}
  \bibinfo{volume}{412}(\bibinfo{number}{20}), pp. \bibinfo{pages}{1979--1997}.
\newblock \urlprefix\url{http://dx.doi.org/10.1016/j.tcs.2010.12.022}.

\bibitemdeclare{phdthesis}{phdvaux}
\bibitem{phdvaux}
\bibinfo{author}{Lionel \surnamestart Vaux\surnameend} (\bibinfo{year}{2007}):
  \emph{\bibinfo{title}{$\l$-calcul diff\'erentiel et logique classique:
  interactions calculatoires}}.
\newblock \bibinfo{type}{Ph.{D}. thesis}, \bibinfo{school}{{U}niversit\'e
  {A}ix-{M}arseille~{II}}.

\end{thebibliography}

}
\end{document}